\documentclass[a4paper,12pt]{article}

\usepackage{amsthm}
\usepackage{amsmath}
\usepackage{amssymb}
\usepackage{latexsym}
\usepackage{xcolor}

\newtheorem{theorem}{Theorem}[section]
\newtheorem{proposition}[theorem]{Proposition}
\newtheorem{lemma}[theorem]{Lemma}

\newtheorem{remark}[theorem]{Remark}
\newtheorem{definition}[theorem]{Definition}
\newtheorem{example}[theorem]{Example}
\newtheorem{assumption}[theorem]{Assumption}

\setlength{\oddsidemargin}{-0.08in}
\setlength{\evensidemargin}{-0.08in}
\setlength{\textheight}{9.0in}
\setlength{\textwidth}{6.3in}
\setlength{\topmargin}{-0.5in}

%
\begin{document}

\title{On robust fundamental theorems of asset pricing in
	discrete time\thanks{This work was supported by Center for Mathematical Modeling and Data Science, Osaka University, Japan. We thank Masaaki Fukasawa, Mikl\'os R\'asonyi, Martin Schweizer, and the referees for constructive comments on the earlier versions of the paper. 
	}
}


\author{Huy N. Chau\\
	            Department of Mathematics, The University of Manchester\\
}




\maketitle

\begin{abstract}
This paper is devoted to a study of robust fundamental theorems of asset pricing in discrete time and finite horizon settings. Uncertainty is modelled by a (possibly uncountable) family of price processes on the same probability space. Our technical assumption is the continuity of the price processes with respect to uncertain parameters. In this setting, we introduce a new topological framework which allows us to use the classical arguments in arbitrage pricing theory involving $L^p$ spaces, the Hahn-Banach separation theorem and other tools from functional analysis. The first result is the equivalence of a ``no robust arbitrage" condition and the existence of a new ``robust pricing system". The second result shows superhedging dualities and the existence of superhedging strategies without restrictive conditions on payoff functions, unlike other related studies. The third result discusses  completeness in the present robust setting. When other options are available for static trading, we could reduce the set of robust pricing systems and hence the superhedging prices.
\end{abstract}

\section{Introduction}
The equivalence between the absence of arbitrage opportunities and the existence of a martingale measure, or the fundamental theorem of asset pricing (FTAP in short), is a core topic to mathematical finance. FTAP results are discussed in classical models under the assumption that the dynamics of risky assets are known precisely, see \cite{kreps1981arbitrage}, \cite{dalang1990equivalent}, \cite{delbaen1994general}, \cite{delbaen2006mathematics}, etc. Nonetheless, model uncertainty, i.e., the risk of using wrong models, cannot be ignored in practice. Since the seminal work of Knight (1921) \cite{knight1921risk}, uncertainty modeling has emerged as effective tools to address this issue. 

The pathwise approach, pioneered by \cite{hobson1998robust}, makes no assumptions on the dynamics of the underlying assets. Instead, the set of all models which are consistent with the prices of observed vanilla options was investigated and bounds on the prices of exotic derivatives were derived. The approach was applied to barrier options in \cite{brown2001robust}, to forward start options in \cite{hobson2012}, to variance options in   \cite{carr2010}, to weighted variance swaps in \cite{davis2014arbitrage},  among others. \cite{davis2007} introduced the concept of model independent
arbitrage and characterized three different situations that a set of option prices would fall into: absence of arbitrage, model-independent arbitrage, or weak forms of model-dependent arbitrage. A notion of weak arbitrage was discussed in  \cite{cox2011robust} to deal with the case of infinitely many given options. In discrete time,  \cite{deparis} proved a duality result for a class of continuous payoffs in a specific topological setup.  Using the theory of Monge–Kantorovich mass transport, \cite{bhp} established superhedging dualities for exotic options. Pathwise versions of FTAP were given in \cite{riedel2015financial} for a one-period market model and in \cite{acciaio2016model} for a continuous time model where a superlinearly growing option is traded. In discrete time markets,  \cite{burzoni2016universal}, \cite{burzoni2019pointwise}  proved versions of FTAP by  investigating different notions of arbitrage and using different sets of admissible scenarios. \cite{burzoni2017} proved a superhedging
duality theorem, characterized
a subset of trajectories for which no duality gap appears. \cite{fahim2016} studied the problem of superhedging under constraints. We   
refer to \cite{hobson2011} for more references on the subject.

The quasi-sure approach assumes that there is a set of priors $\mathcal{P}$ instead of a specific reference probability measure. The first studies are \cite{avellaneda1995pricing}, \cite{Lyons}, where the uncertain volatility model was introduced. In these papers, the unknown volatility process is assumed to evolve in a fixed interval. Optimization problems under multiple priors were studied in \cite{quenez}, \cite{schied06}, \cite{nutz2016} and others. 
For a nondominated set of priors $\mathcal{P}$, \cite{bn15} proved a version of FTAP in discrete time, and obtained a family $\mathcal{Q}$ of martingale measures such that each $P \in \mathcal{P}$ is dominated by a martingale measure in $\mathcal{Q}$ which may be nonequivalent to $P$.  Other versions of FTAP and superhedging dualities under constraints were given in \cite{bayraktar2017}. Recently, \cite{blanchard2019no} proved that there is a subset $\mathcal{P}_0 \subset \mathcal{P}$ such that each model $P \in \mathcal{P}_0$ satisfies the classical no arbitrage condition. In a continuous-time setting, \cite{biagini2015robust} introduced a robust notion of no-arbitrage of the first kind and showed that it holds if and only if every $P \in \mathcal{P}$ admits an equivalent local martingale measure 
up to a certain lifetime. In a discrete-time setting, \cite{nutz2014} established the existence of optimal superhedging strategies by using the notions of medial limit, sequential closedness, and by imposing the existence of martingale measures. Under technical conditions, \cite{obloj2018unified} proved that the pathwise approach and the quasi-sure approach are equivalent.

The parametrization approach 
assumes that the uncertain risky assets are modelled by a parametrized family of dynamics $S^{\theta}, \theta \in \Theta$  on the same filtered probability space.  
In this framework, \cite{rasonyi2018utility} and \cite{chau2019robust} studied the robust utility maximization problems in discrete time and in  continuous time settings, respectively. However, FTAP results have not been obtained yet and we prove such fundamental results in this paper. We summarize our approach, contributions and more detailed comparisons to related studies below.
\begin{itemize}
\item 
From the modelling point of view, the parametrization framework differs from the pathwise and the quasi-sure approaches in different ways. In the pathwise approach, randomness and filtrations  are generated by the canonical process. The quasi-sure approach works with Polish spaces and filtrations come from universal completion of Borel sigma fields.  In contrast, the parametrization approach does not require any conditions on the state space. It  may incorporate different sources of randomness to each price process and the filtration could be greatly richer than the natural filtration of stock prices. These properties are useful if one wishes to deal with many price processes and more complex payoffs. For example, in a discrete-time  quasi-sure setting where European options are available for static trading, \cite{bhz2015} showed that, in their languages, the superhedging price (hedger’s price) of an American option can be strictly greater than the highest model-based price (Nature’s price). In a similar framework,  \cite{hobson2017} showed that these
two prices are equal. In \cite{bz2017}, the authors explained the two conflicting results by showing the differences in the information used by the hedger and Nature: in \cite{bhz2015} the hedger and Nature have the same information while Nature has access to more information in \cite{hobson2017}. Another explanation was given in \cite{hobson2016}, where the authors argued that \cite{bhz2015} did not  consider enough models.  \cite{bz2017} introduced randomized models which are relevant for computing the
superhedging prices and reconciled the two results.  In another direction, \cite{adot} enlarged the probability space by the exercise times of
the American option to recover the superhedging duality. To sum up, a sufficient amount of information and randomness is required for pricing American options. 
In the parametrization framework, we can introduce as many randomness, e.g.,  an infinite number of driving noises, as we wish. This flexibility may be useful for pricing American and other complex options. However, this question is beyond the scope of this paper and thus left for future studies. On the other hand, Polish spaces are not enough  in some situations, see for example  \cite{khan2014}. Theorems 1, 2 of \cite{khan2014} proved that the existence of equilibria in various game settings is ensured if the probability space is saturated, the  concept was first introduced in \cite{hoover1984}. Polish spaces, which are called ordinary probability spaces in \cite{fajardo2017},  are not saturated, see Theorem 3B.1 of \cite{fajardo2017}. Therefore, the present work offers an alternative approach that would be flexible enough for different modelling purposes. The choice between these frameworks is a modelling issue. 

\item The three approaches are also different from technical points of view. The pathwise approach assumes that there are some traded vanilla options from which marginal distributions of the underlying assets are deduced.  Techniques with martingale optimal transports are employed to derive robust  bounds for other exotic options. In the quasi-sure approach, ones have to work in a ``local fashion" where heavy tools from the theory of analytic sets and measurable selections are applied to glue one-period solutions together by using dynamic programming.  In contrast, the parametrization framework does not require the existence of other options as a part of modelling, and it allows to use the standard arguments from the classical no arbitrage pricing theory.  Our proof techniques include a new global argument without dynamic programming. The global argument is suitable for continuous time models and in particular for models with transaction costs, see \cite{cfr2020} for more details. Most importantly, we are able to apply the $L^p$ theory to reach satisfactory results without restrictive conditions. 

\item Different versions of the first FTAP were discussed in uncertainty modelling frameworks. In the quasi-sure setting with a nondominated family
of priors $\mathcal{P}$, the Dalang-Morton-Willinger (DMW) theorem was extended in \cite{bn15}. In \cite{bn15},  the probability space $\Omega$ is required to be Polish, or to be a product of Polish spaces, e.g., $\Omega = \Omega_1^T$, where $\Omega_1$ is a Polish space. The set $\mathcal{P}$ contains product measures of the form $P_0 \otimes P_1 \otimes ...  \otimes P_{T-1}$ where $P_t$ is a universally measurable selector of $\mathcal{P}_t$, the set of possible models for the $t$-th period, given
state $\omega$ at time $t$. It is technically imposed that $\text{graph}(\mathcal{P}_t)$ must be an analytic subset of $\Omega_t \times \mathfrak{B}(\Omega_t)$, where $\mathfrak{B}(\Omega_t)$ is the set
of all probability measures on $\Omega_t$. In \cite{acciaio2016model}, a pathwise version of the first FTAP was given, under the existence of a superlinearly growing option. This condition ensures the compactness of the set of martingale measures compatible with option prices. In the parametrization  setting, we prove a robust version of the DMW theorem without any condition on $\Omega$ and the existence of other traded options, thus our results cannot be deduced from \cite{bn15}, \cite{acciaio2016model}. Technically, we assume the continuity of the price processes with respect to the uncertain parameters. In addition, the laws of the uncertain price processes  in the current setting are not necessarily of the product forms, see Example 4.2 of \cite{rasonyi2018utility}.  

\item Different superhedging dualities under uncertainty were introduced. \cite{denis2006} established a duality in a continuous time setting for option payoffs in the class $C_b(\Omega)$, the set of bounded continuous functions on $\Omega$.  The duality of \cite{burzoni2017} holds true for any measurable claim, however, \cite{burzoni2017} imposed the existence of martingale measures and did not study the no arbitrage property. Under transaction costs, \cite{dolinsky14} established a duality result for  upper semicontinuous option payoffs which are at most quadratic growth. Under technical assumptions, \cite{bn15} proved a superhedging duality for upper semianalytic payoffs using a dynamic programming argument. As pointed out by the authors, semianalyticity is preserved through the recursion, whereas Borel-measurability is not. The duality from \cite{acciaio2016model} works for upper semicontinuous and linearly bounded from above payoffs. Other results were obtained in \cite{bhp}, \cite{dolinsky14b}, \cite{dolinsky-soner2},  \cite{hou2018}, and others.  Our superhedging dualities apply to the most general class of option payoffs  as long as a mild  integrability condition is satisfied. Furthermore, we prove the closedness of the set of hedgeable claims and the existence of hedging strategies by using tools from functional analysis. Our arguments are completely different from that of \cite{bn15}, \cite{acciaio2016model}, and suitable for continuous time settings, see also \cite{cfr2020}.   
\item There are many situations where the parametrization approach is not applicable.
For example, when the investors do not know precisely what null and sure events are, the quasi-sure approach is appropriate. The parametrization framework is not suitable in such situations because these events are known to the investors under a fixed probability measure. Other examples could be found in \cite{rasonyi2018utility}.

\end{itemize}



The paper is organized as follows. Section \ref{sec:model} introduces the setting. Main results and discussion are given in Section \ref{sec:main_results}. Proofs are given in Section \ref{sec:main}. Some preliminaries and useful results are given in Section \ref{sec:app}.

\textit{Notations}. Let $I$ be an arbitrary set. In the product space $\mathbf{X}=\prod_{i \in I} X_i$, a vector $(f^i)_{i \in I}$ will be denoted by $\mathbf{f}$. We write $\mathbf{f} \ge \mathbf{g}$ if $f^i \ge g^i$ for all $i \in I$. In addition, $\mathbf{1}$ denotes the vector with all coordinates equal to $1$ and  $\mathbf{1}^{i}$ denotes the vector with only the coordinate $i$ equals to $1$ and the others are zero. 

\section{The model}\label{sec:model}
In this paper, we work with a discrete time framework. Let $(\Omega, \mathcal{F}, (\mathcal{F}_{t})_{t = 0,1,...,T}, P)$ be a filtered probability space and $T \in \mathbb{N}$. We suppose that $\mathcal{F}_0$ contains
all $P$-zero sets. Let  $B \equiv 1$ be a non-risky asset. Let $\Theta$ be a non-empty set, interpreted as the parametrisation of uncertainty. For each $\theta \in \Theta$, let $S^{\theta}_t = (S^{\theta,1}_t,...,S^{\theta,d}_t), d \ge 1, 0 \le t \le T$ be a $\mathbb{R}^d$-valued process such that $S^{\theta}_t$ is $\mathcal{F}_t$-measurable and  $S^{\theta}_0 \in \mathbb{R}^d$. The increments of a process $Y$ are denoted by $\Delta Y_t := Y_t - Y_{t-1}, t = 1,..., T.$ In order to include interesting robust models,  e.g., models with a nondominated family of laws, $\Theta$ is neither countable nor convex. Suppose that the true risky asset is $S= S^{\theta^*}$, where $\theta^* \in \Theta$ is the exact, but unknown, parameter. 
\begin{assumption}\label{assum:Theta}
	The following conditions are imposed throughout the paper:
	\begin{itemize}
		\item[(i)] $\Theta$ is a subset of a separable metric space,
		\item[(ii)] For any sequence $(\theta_n)_{n \in \mathbb{N}}$ such that $\theta_n \to \theta$ in $\Theta$, there exists a subsequence $(\theta_{n_k})_{k \in \mathbb{N}}$ such that for each $0\le t \le T$, it holds that $\lim_{k \to \infty} S^{\theta_{n_k}}_t = S^{\theta}_t, a.s. $ 
	\end{itemize} 
\end{assumption}
Assumption \ref{assum:Theta} (i) requires a metric structure for $\Theta$ and $\Theta$ is also separable, as a subset of a separable metric space, see Theorem 15.13 of \cite{Schechter1996handbook}. Assumption \ref{assum:Theta} (ii) asks for the continuity of the risky asset with respect to the uncertain parameter. For example, if $S^{\theta_n}_t$ converges to $S^{\theta}_t$ in $L^p, p \ge 1,$ when  $\theta_n$ converges to $\theta$, Assumption \ref{assum:Theta} (ii) holds. When $\Theta$ is uncountable, these technical conditions are necessary for our arguments to extend the proofs for countable cases to uncountable ones.  Furthermore, the continuity  in Assumption \ref{assum:Theta} does not necessarily imply that the laws of $S^{\theta}, \theta \in \Theta$ are dominated, see Example \ref{ex:nondominated}.  As discussed in Section 3.2.1 of \cite{lms}, the present modelling framework corresponds to the class $\mathbf{S}$ property introduced in their paper.
\begin{example}[A toy model]\label{ex:toy}
Consider the $t$-fold Cartesian product $\Omega_t = \{ -1, 1\}^t$. Denote by $2^{\Omega_t}$ the power set of $\Omega_t$ and $\Omega = \Omega_T$. The mapping $\Pi_t: \Omega_T \to \Omega_t$ is defined by
	\begin{equation}\label{eq:Pi}
		\Pi_t(\omega):= (\omega_1,...,\omega_t), \qquad \forall \omega = (\omega_1,...,\omega_T) \in \Omega.
	\end{equation}
	Set $\mathcal{F}_t = \Pi^{-1}_t(2^{\Omega_t})$. The measurable space $(\Omega, 2^{\Omega})$ is equipped with the probability measure
	$$P(\{\omega\}) = \prod_{t=1}^T \left( \frac{1}{2} \delta_{1}(\{ \text{proj}_t(\omega) \}) + \frac{1}{2} \delta_{-1}(\{ \text{proj}_t(\omega) \}) \right),$$
	where $\delta_{x}: \mathcal{B}(\mathbb{R}^d) \to \{0,1\}$ is the Dirac measure at $x$
	\begin{eqnarray*}
		\delta_x(B) = \left\{
		\begin{array}{ll}
			1  & \mbox{if } x \in B, \\
			0 & \mbox{otherwise}
		\end{array}
		\right.
	\end{eqnarray*}
	and $\text{proj}_t$ is the projection at $t$ from $\Omega$ to $\{-1,1\}$,
	$$ \text{proj}_t(\omega_1,...,\omega_T) = \omega_t, \text{ for } \omega \in \Omega.$$
	For a vector of parameters $\theta = \left( (\mu_{t})_{1 \le t \le T}, (\sigma_{t})_{1 \le t \le T} \right) \in \Theta = [\underline{\mu},\overline{\mu}]^T \times [\underline{\sigma},\overline{\sigma}]^T$ for some $0 < \underline{\sigma} <\overline{\sigma}, \underline{\mu},\overline{\mu} \in \mathbb{R}$, we define the real-valued process $S^{\theta}$ as
	\begin{equation}\label{eq:S}
		S^{\theta}_t = s_0 + \sum_{u=1}^t (\mu_{u} + \sigma_{u}\text{proj}_u(\omega) ), \qquad s_0 \in \mathbb{R}^d.
	\end{equation}
	Here the drift and volatility parameters may vary in different periods. This is an example where the volatility and drift are uncertain. 
	\end{example}
\begin{example}[A nondominated family of laws]\label{ex:nondominated}
Let $\Omega_1 = \{-1,1\}$ be equipped with the discrete $\sigma$-algebra and the uniform measure $P_1$. We construct the product space $\Omega = \Omega_1^{\kappa}$ together with the product probability measure $P$, where $\kappa$ is an uncountable cardinal, see Section 2.4 of \cite{tao}. As discussed in \cite{fajardo2017}, \cite{Keisler2009}, the probability space $\Omega$ is saturated, however, not a Polish space. Assume $T=1$. Let $Y : \Omega \to \mathbb{R} $ be a random variable such that $P(Y > 0)P(Y < 0) > 0$.  Define 
$$S^{\theta}_1 = 1 + \theta Y(\omega), \qquad  S^{\theta}_0 = 1,$$
where $\Theta = [\underline{\sigma}, \overline{\sigma}] \subset \mathbb{R}_+$. Assumption \ref{assum:Theta} is fulfilled. Denote by $P^{\theta}(\cdot):=P(S^{\theta} \in \cdot)$ the law of $S^{\theta}$, and $A^{\theta} = \{ 1+ \theta Y(1), 1 + \theta Y(-1)\}$ so that $P^{\theta}(A^{\theta}) = 1$, for $\theta \in \Theta$. The family $\mathcal{P}= \{P^{\theta}, \theta \in \Theta\}$ is not dominated by a probability measure. Indeed, if it is dominated, the Halmos Savage lemma, see Lemma 7 of \cite{halmossavage}, implies the existence of a countable subclass $\{P^{\theta_i}, i \in \mathbb{N}\} \approx \mathcal{P}$, in the sense that $\sup_{i\in \mathbb{N}} P^{\theta_i}(A)= 0 \Leftrightarrow \sup_{P \in \mathcal{P}}P(A) = 0$. This contradicts the fact that $P^{\theta}(A^{\theta}) = 1 \ne P^{\theta'}(A^{\theta})=0$ when $\theta \ne \theta'$.  
\end{example}
\begin{example}
Let $\Theta = \prod_{i=1}^d \prod_{j=1}^n \left(  [\underline{\mu}^i,\overline{\mu}^i] \times [\underline{\sigma}^{ij},\overline{\sigma}^{ij}]\right) $	for some $0 < \underline{\sigma}^{ij} <\overline{\sigma}^{ij}, \underline{\mu}^i,\overline{\mu}^i \in \mathbb{R}$. For each $\theta = (\mu^i,\sigma^{ij})_{1\le i \le d, 1 \le j \le n} \in \Theta$, we consider 
$$S^i_{t} = S^i_{t-1}\left( \mu^i \Delta t + \sum_{j=1}^n\sigma^{ij} \Delta W^j_t\right),$$ 
where $\Delta t > 0$ and $\Delta W^j_t, j \in \{1,...,n\}, t \in \{1,...,T\}$ are independent normally distributed random variables with mean $0$ and variance $\Delta t$. In this example, each risky process follows a discrete
version of the Black–Scholes model where the drift $\mu^i$ and the volatilities $\sigma^{i,j}$ are uncertain. 
\end{example}
\begin{example}[Example 4.4 of \cite{rasonyi2018utility}] 
We recall a nonparametric example of \cite{rasonyi2018utility}. On a probability space $(\Omega, \mathcal{F}, P)$, we consider independent
and standard uniform random variables $\varepsilon_1,...,\varepsilon_T$. Denote the filtration $\mathcal{F}_t = \sigma(\varepsilon_1,...,\varepsilon_t), t = 1,...,T$. The space $C([0,1])$ of all continuous real-valued functions on $[0,1]$ with the metric of uniform convergence is a separable space. Let $\Theta = \{\theta = (\theta_1,..,\theta_T)\}$ be a subset of $C([0,1])^T$ such that for each $t =1,...,T$, the function $\theta_t: [0,1] \to \mathbb{R}$ is nondecreasing. Define
$$S^{\theta}_t(\omega) = s_0 + \sum_{s = 1}^t \theta_s(\varepsilon_s(\omega)),$$
for some $s_0 \in \mathbb{R}$. Assumption \ref{assum:Theta} is satisfied. This example covers many kinds of distributions for $S^{\theta}$ because any distribution can be generated from its cumulative distribution function and a standard uniform random variable by the inversion method. 
\end{example}
\begin{example}We consider the weak identification framework discussed in \cite{li}. Let $L$ be the lag operator. Assume that $y_t = \log(S_t/S_{t-1})$ is modelled as an $ARFIMA(1, d, 0)$ process
$$(1-\alpha L)y_t =  \sigma(1-L)^d \varepsilon_t,$$
where $\alpha
$ is the autoregressive coefficient, $d$ is the memory parameter, $\varepsilon_t$ is a stationary martingale difference sequence with unit variance. It is explained in \cite{li} that the $ARFIMA(1, d, 0)$ model with $\alpha = 1$ is
equivalent to the $ARFIMA(1, d + 1, 0)$ model with $\alpha = 0$. In particular, a ``rough" configuration with $(d \approx - 0.5, \alpha \approx 1)$ is observationally similar to a ``long memory" configuration with $(d \approx 0.5, \alpha \approx 0)$. Consider $\Theta = \left([\underline{d},0] \times  [1-\delta, 1] \right) \cup \left( [0, \overline{d}] \times [-\delta, \delta]  \right) $ where $\delta > 0, \overline{d} > 0, \underline{d} < 0$ are real constants.  The case $\theta = (d,\alpha) \in [\underline{d},0] \times  [1-\delta, 1] $ corresponds to the rough configuration and the case $\theta = (d, \alpha) \in [0, \overline{d}] \times [-\delta, \delta]$ corresponds to the long memory one. This is a robust version of the framework of \cite{li}. The set $\Theta$ is \emph{not} convex, however, is fitted in our robust setting. 
\end{example}

Our framework is applicable in many situations where other approaches are not. In Example \ref{ex:toy}, denote by $P^{\theta}(\cdot):=P(S^{\theta} \in \cdot)$ the law of $S^{\theta}$ and $\mathcal{P} = \{P^{\theta}, \theta \in \Theta\}$. The set $\mathcal{P}$ is not convex in general, hence does not satisfy the conditions of the set of probability measures $\mathcal{P}_t(\omega)$ in \cite{bn15}, \cite{barlt}. In Example \ref{ex:nondominated}, $\Omega$ is not a Polish space.  In Example 4.2 of \cite{rasonyi2018utility}, the “time-consistency” property in \cite{bn15}, \cite{nutz2016}, etc. is not satisfied. However, these examples could be treated in our framework.  On the other hand, there are situations where our framework is not useful, see Example 4.3 of \cite{rasonyi2018utility}. 


In the following, we introduce our new techniques. In classical settings without uncertainty, one strategy generates one attainable payoff. In the present robust setting, for a given strategy $H$ there is a corresponding family of attainable payoffs $\{H \cdot S^{\theta}_T, \theta \in \Theta \}$. Hence the classical arguments for $L^p$ spaces could not work with such family of payoffs under uncertainty.  We show below that the correct function space is the product space $\mathbf{L}^p = \prod_{\theta \in \Theta} L^p$, which is huge, since $\Theta$ is usually an uncountable set. This is the starting point of the present paper.


The space $L^0(\mathcal{F}_t,P,\mathbb{R}^d)$ is equipped with the topology of convergence in probability, induced by the translation-invariant metric $d(f,g):= E[1 \wedge |f - g|]$. With this structure, $L^0(\mathcal{F}_t,P,\mathbb{R}^d)$ is a Fr\'echet space. When there are no confusion we write $L^0(\mathcal{F}_t,P)$ instead of $L^0(\mathcal{F}_t,P,\mathbb{R}^d)$. Define the product space $\mathbf{L}^0(\mathcal{F}_t,P):=\prod_{\Theta} \left( L^0(\mathcal{F}_t,P),d\right) $ with the corresponding product topology. Similarly, for $p \ge 1$, the space $L^p(\mathcal{F}_t,P)$ is equipped with the topology from the norm $\|\cdot\|_p$ and we define $\mathbf{L}^p(\mathcal{F}_t,P) :=\prod_{\Theta} \left( L^p(\mathcal{F}_t,P), \|\cdot\|_p \right)$ with the corresponding product topology. See also Appendix \ref{sec:app_prod} for basic properties of product spaces and their duals. For $t = 0,...,T$, we define by $\mathbf{S}_t = (S^{\theta}_t)_{\theta \in \Theta}$ a vector in $\mathbf{L}^0(\mathcal{F}_t,P)$. Let $\mathcal{A}$ be the set of all predictable processes $H = (H_t)_{t\in\{1,...,T\}}$ where $H_t \in L^0(\mathcal{F}_{t-1},P), t = 1,...,T$, i.e., trading strategies.  For $H \in \mathcal{A}$, we denote 
\begin{eqnarray*}
H \cdot S^{\theta}_t &=& \sum_{s = 1}^t H_s \Delta S^{\theta}_s, \ \theta \in \Theta, t = 1,...,T,\\
 H \cdot \mathbf{S}_t &=& \sum_{s = 1}^t H_s \Delta \mathbf{S}_s,\ t = 1,...,T,
\end{eqnarray*}
where $\Delta \mathbf{S}_s = \mathbf{S}_s - \mathbf{S}_{s-1} = (S^{\theta}_s - S^{\theta}_{s-1})_{\theta \in \Theta}$.
\begin{definition}\label{defi:NRA}
A robust arbitrage opportunity is a self-financing strategy $H \in \mathcal{A}$ such that
\begin{equation*}\label{eq:RA}
	\forall \theta \in \Theta,  H \cdot S^{\theta}_T \ge 0,\ P-a.s. \qquad \text{ and } \qquad  P(H \cdot S^{\theta}_T) > 0, \ \text{ for some } \theta \in \Theta. 
\end{equation*}	
We say that the market satisfies the condition No Robust Arbitrage $(NRA(\Theta))$ if for every self-financing strategy $H \in \mathcal{A}$, 
	\begin{equation}\label{eq:NRA}
	\forall \theta \in \Theta,  H \cdot S^{\theta}_T \ge 0 \text{  }P-a.s. \qquad \text{ then } \qquad \forall \theta \in \Theta,  H \cdot S^{\theta}_T = 0, \ P-a.s.
	\end{equation}
\end{definition}
The property (\ref{eq:NRA}) is rewritten shortly as 
$$ H \cdot \mathbf{S}_T \ge 0 \qquad \text{ then } \qquad  H \cdot \mathbf{S}_T = 0,$$
and it should be noticed that the inequality and equality are $\theta$-wise. 

Denote by $\mathbf{L}^0_+(\mathcal{F}_T,P)$ the non-negative orthant of $\mathbf{L}^0(\mathcal{F}_T,P)$, i.e., the subset of $\mathbf{L}^0$ consisting elements $\mathbf{f} = (f^{\theta})_{\theta \in \Theta}$ such that $f^{\theta}$ is $\mathcal{F}_T$-measurable and $P[f^{\theta}<0]=0$ for all $\theta \in  \Theta$. Let
\begin{eqnarray*}
	\mathbf{K}^{\Theta} &=& \left\lbrace  H \cdot \mathbf{S}_T \in \mathbf{L}^0(\mathcal{F}_T,P), H \in \mathcal{A} \right\rbrace ,\\
	\mathbf{C}^{\Theta} &=& \mathbf{K}^{\Theta} - \mathbf{L}^0_+(\mathcal{F}_T,P),
\end{eqnarray*}
be the sets of all attainable and hedgeable payoffs \footnote{$A - B := \{ a-b, a \in A, b \in B\}$ for two sets $A$, $B$.}, respectively. With these notations, $NRA(\Theta)$ can be formulated as follows
$$ \mathbf{K}^{\Theta} \cap \mathbf{L}^0_+(\mathcal{F}_T,P) = \{\mathbf{0}\} \text{  or equivalently  } \mathbf{C}^{\Theta} \cap \mathbf{L}^0_+(\mathcal{F}_T,P)  = \{\mathbf{0}\}.$$
It is easy to see that $NRA(\Theta)$ reduces to the classical no arbitrage property when $\Theta = \{\theta\}$ is a singleton. We write $NA(\{\theta\})$ for the classical no arbitrage  condition for the model $\theta$.

Some no-arbitrage conditions introduced in \cite{blanchard2019no} are recalled by using the current notations.
\begin{definition}\label{defi:swNA}
	The condition strong no-arbitrage $sNA(\Theta)$ holds true if the condition $NA(\{\theta\})$ holds true for all $\theta \in \Theta$. The condition weak no-arbitrage $wNA(\Theta)$ holds true if there exists some $\theta \in \Theta$ such that the condition $NA(\{\theta\})$ holds true.
\end{definition}
In the present framework, it is possible that $NRA(\Theta)$ holds, however, both $sNA(\Theta)$ and $wNA(\Theta)$ fail, see Case 1 of Example \ref{ex:1}. In the quasi-sure setting with a convex family $\mathcal
P$, the condition $NA(\mathcal{P})$ implies that $wNA(\mathcal{P})$ holds true, see Theorem 3.6. of \cite{blanchard2019no}, although $sNA(\mathcal{P})$ may fail, see also Example \ref{ex:1} for more details.

The typical approach to derive martingale measures is to use the separation theorem. Here, our situation is different since the product spaces $\mathbf{L}^p, p \ge 1$ are not Banach spaces, unless $|\Theta|$ is finite, and even not metric spaces, unless $|\Theta|$ is countable. However, the product spaces $\mathbf{L}^p, p \ge 1$ are locally convex and therefore the Hahn-Banach theorem is applicable. If one wishes to use separation arguments on the product spaces, the closedness of the set of hedgeable claims $\mathbf{C}^{\Theta}$ needs to be obtained first. Crucial differences between sequential closedness and topological closedness in $\mathbf{L}^p, p \ge 1$ lead to the use of nets, instead of sequences, to determine such closures. After establishing the closedness of $\mathbf{C}^{\Theta}$, the Hahn-Banach theorem will give new pricing systems which play the same roles as martingale measures in classical settings. The novelty is that the new pricing systems average jointly scenarios and models to rule out model independent arbitrage opportunities.  

\begin{remark}{\label{re:one}}
	This robust framework is different from the usual setting with multiple assets. In a financial market with $|\Theta|$ underlying assets, a strategy at time $t$ is a vector $(H^{\theta}_{t})_{\theta \in \Theta} \in \prod_{\theta \in \Theta}L^0(\mathcal{F}_{t-1},P)$. In our robust setting, \emph{one} strategy $H_{t} \in L^0(\mathcal{F}_{t-1},P)$  is used for \emph{all} price processes. The set of strategies for the robust setting is much smaller, and as a result, the dual space is much larger. The discrepancy becomes significant when $\Theta$ is uncountable. 
\end{remark}
\begin{remark}
The dynamics $S^{\theta}, \theta \ne \theta^*$ are not observed from market data, only the stock price $S = S^{\theta^*}$ is observable, but the information is not enough to determine $\theta^*$. All filtrations generated by $S^{\theta}, \theta \in \Theta$ are contained in $\mathcal{F}$. Investors can build strategies as functions of $S^{\theta^*}$, or of $S^{\theta}, \theta \ne \theta^*$, as long as their strategies are $\mathcal{F}$-predictable. 
Let us consider the following example. Assume that $\mathcal{F}$ is generated by a process $Y$ and $S^{\theta}_t =F(Y_t, \theta)$ for unknown parameter $\theta \in \Theta$, where $F$ is a non-linear function. A strategy $H = (H_t)_{t \in \{1,...,T\}}$ could be path-dependent, i.e., $H_t = H_t(Y_{0},...,Y_{t-1})$, or in particular, a function of stock prices $H_t = H_t(S^{\theta^*}_{0},...,S^{\theta^*}_{t-1}) = H_t(F(Y_{0}, \theta^*), ...,F(Y_{t-1}, \theta^*))$, but this does not necessarily mean that $\theta^*$ is revealed.  Profits and losses are modelled by the vector $(H_t (S^{\theta}_{t+1} - S^{\theta}_{t}))_{\theta \in \Theta}$. 
See also examples from \cite{cfr2020}. 
\end{remark} 
\begin{remark}
	Our setting is also different from ``large financial'' market models, where there is a continuum of securities. For example, in bond markets, zero-coupon bonds are parametrized by their maturities $\theta$ which is a continuous parameter. However, only a finite number of bonds are traded at the same time, see \cite{klein2000fundamental}, \cite{rasonyi2003equivalent}, \cite{balbas2007infinitely}. 
\end{remark}
\begin{remark}
	By postulating an appropriate weak topology on each $L^p$, we define the corresponding weak topology on the product space $\mathbf{L}^p= \prod_{\theta \in \Theta} L^p$,  which is useful in continuous time settings, where ``local fashions", see for example \cite{bn15}, are not applicable. We refer to \cite{cfr2020} for a treatment with transaction costs in continuous time settings. 
\end{remark}

It is possible to translate the market structure $(\Omega, (\mathcal{F}_t)_{t = 0,...,T}, P, (S^{\theta})_{\theta \in \Theta})$  into the quasi-sure setting. Ones who wish to do so may follow the general framework of \cite{bn15} for $\widehat{\Omega}_1 = \mathbb{R}^d$ together with the Borel sigma field $\mathcal{B}(\mathbb{R}^d)$. Define $\widehat{\Omega}_t := (\mathbb{R}^d)^t$ with the convention that $\widehat{\Omega}_0$ is a singleton and $\widehat{\Omega} = \widehat{\Omega}_T$. Let $\widehat{S}$ be the canonical process $\widehat{S}_{t}(\widehat{\omega}) = \widehat{\omega}_t$ for $0\le t \le T$ and $(\widehat{\mathcal{F}}_t)_{t = 0,...,T}$ be the canonical filtration. For each $\theta \in \Theta$, define the probability measure $P^{\theta} = Law((S^{\theta}_t)_{0\le t\le T})$.  The convex hull of the set $\{P^{\theta}, \theta \in \Theta\}$ is defined by $\mathcal{P}$, which is a family of priors in the quasi-sure setting. Here the risky asset is identical with the canonical process while \cite{bn15} considered the general case where the risky asset is a measurable function of the canonical process. 


\section{Main results}\label{sec:main_results}
Recall that $\mathbf{L}^{1}(\mathcal{F}_{T},P)= \prod_{\theta \in \Theta} L^{1}(\mathcal{F}_{T},P)$ is the product space where each $L^{1}(\mathcal{F}_{T},P)$ is equipped with the usual $\|\cdot\|_1$-norm  topology. Define also the direct sum $\bigoplus_{\theta \in \Theta} L^{\infty}(\mathcal{F}_T,P)$. We refer to Appendix for more details on direct sums, product spaces and their duals. The duality $(\mathbf{L}^{1}(\mathcal{F}_{T},P))^* = \bigoplus_{\theta \in \Theta} L^{\infty}(\mathcal{F}_T,P)$ allows us to identify a linear continuous function $\mathbf{Q}: \mathbf{L}^{1}(\mathcal{F}_{T},P) \to \mathbb{R}$ with a vector $(Z^{\theta}_T)_{\theta \in \Theta}$ with only finite number of non-zero elements.
\subsection{The case without options}
The notion of robust pricing system below is central of this paper.
\begin{definition}\label{def:pricing_func}
	A robust pricing system for $\mathbf{S}$ is a linear continuous function $\mathbf{Q}: \mathbf{L}^{1}(\mathcal{F}_{T},P) \to \mathbb{R}$, that is $\mathbf{Q} = (Z^{\theta}_T)_{\theta \in \Theta} \in \bigoplus_{\theta \in \Theta} L^{\infty}(\mathcal{F}_T,P)$ and such that
	\begin{itemize}
		\item[(i)] $0 \le Z^{\theta}_T$ for all $\theta \in \Theta$ and $\mathbf{Q}(\mathbf{1}) = 1$. 
		\item [(ii)]  $\mathbf{S}$ is a generalized martingale under $\mathbf{Q}$, see Definition \ref{defi:mart}, that is 
		$$\mathbf{Q}(1_{A_{t-1}}\mathbf{S}_{t}) = \mathbf{Q}(1_{A_{t-1}}\mathbf{S}_{t-1})$$
		for all $A_{t-1}\in \mathcal{F}_{t-1}, 1\le t\le T.$
	\end{itemize}
\end{definition}
Since $\mathbf{Q}(\mathbf{1}) = 1$ and $Z^{\theta}_T \ge 0$ for all $\theta \in \Theta$, it holds that $E[Z^{\theta}_T1_A] > 0$ for some $ \theta \in \Theta$ and for some $A \in \mathcal{F}_T$. 
For each $\theta \in \Theta$, we denote by 
\begin{eqnarray*}
	\mathcal{Q}^{\theta}:=  \left\lbrace  \mathbf{Q}: \mathbf{Q} \text{ is a robust pricing system for $\mathbf{S}$  and } Z^{ \theta}_T \ne 0 \right\rbrace
\end{eqnarray*}
the set of robust pricing systems for the model $ \theta$ and by $\mathcal{Q} = \bigcup_{\theta \in \Theta} \mathcal{Q}^{ \theta}$ the set of all robust pricing systems. 

Our first result is a robust version of the celebrated Dalang–Morton–Willinger (DMW) theorem, see \cite{dalang1990equivalent}.   
\begin{theorem}[Robust FTAP]\label{thm:robustftap}
	The following are equivalent
	\begin{itemize}
		\item[(i)] $NRA(\Theta)$ holds;
		\item[(ii)] For every $\theta \in \Theta$ and for every $A \in \mathcal{F}_T$ such that $P[A] > 0$, there exists $\mathbf{Q} \in \mathcal{Q}^{\theta}$ such that $\mathbf{Q}(1_A\mathbf{1}^{\theta}) > 0$.
	\end{itemize}
\end{theorem}
Let $\mathbf{f} = (f^{\theta})_{\theta \in \Theta}$ be a random variable in $\mathbf{L}^0(\mathcal{F}_T,P)$. Denote the superhedging price of $\mathbf{f}$ by
$$\pi(\mathbf{f}):= \inf\{ x \in \mathbb{R}: \exists H \in \mathcal{A} \text{ such that } x + H \cdot  \mathbf{S}_T \ge \mathbf{f}, a.s. \}.$$
Assumption \ref{assum:Theta} (ii) does not imply that the super-hedging price $\pi(\mathbf{f})$ can be fully characterised by computing robust superhedging prices of  $(f^{\theta})_{\theta \in \Gamma}$ with respect to any countable set $\Gamma \subset \Theta$, because in general, there is no continuity of the payoff $f^{\theta}$ with respect to $\theta$. The computation of $\pi(\mathbf{f})$ is therefore not trivial.  The following superhedging duality is  our second main result.

\begin{theorem}\label{thm:superhedge}
	Let $NRA(\Theta)$ hold. Let $\mathbf{f} \in \mathbf{L}^0(\mathcal{F}_T,P)$ be a random variable such that 
	$$\sup_{\mathbf{Q} \in \mathcal{Q}} \mathbf{Q}(\mathbf{f}) < \infty.$$ 
	Then the superhedging duality
	$$ \pi(\mathbf{f}) =  \sup_{\mathbf{Q} \in \mathcal{Q}} \mathbf{Q}(\mathbf{f}) $$
	 holds and there exists some superhedging strategy $H \in \mathcal{A}$ such that 
	$$\pi(\mathbf{f}) + H \cdot \mathbf{S}_T \ge \mathbf{f}, a.s.$$ 
\end{theorem}
A contingent claim $\mathbf{f}$ is a vector of random variables in $\mathbf{L}^0(\mathcal{F}_T,P)$ such that $\mathbf{f} \ge 0, a.s.$. A contingent claim $\mathbf{f}$ is called replicable if there exist $x \in \mathbb{R}$ and $H \in \mathcal{A}$ such that 
$$x + H \cdot \mathbf{S}_T = \mathbf{f}, a.s..$$ The market is complete if all contingent claims are replicable. Our third main result is the following. 
\begin{theorem}\label{thm:complete}
	Under $NRA(\Theta)$, the market is complete if and only if $|\mathcal{Q}| = 1$. Furthermore, if $NA(\{\theta\})$ holds for some $\theta \in \Theta$, then $\Theta= \{ \theta \}$.
\end{theorem}

\subsection{The case with options}
Usually, information from the market data, for example available option prices, are used in calibration to select models that fit real data. We show in this section that this fitting procedure helps to reduce the set of robust pricing systems, and hence superhedging prices.

Let $e \in \mathbb{N}$ and $g^i: \Omega \to \mathbb{R}^d, i =1,...,e$ be traded options which can be only bought or sold at time $t = 0$ at the market price $g^i_0$. We may assume that $g^i_0=0, i = 1,...,e$ and the option $g^i$ will give the payoff $\mathbf{g}^i := (g^i(S^{\theta}_T))_{\theta \in \Theta}$ at time $T$. For a vector $\mathbf{a} = (a_1,...,a_e) \in \mathbb{R}^e$, the option portfolio from $\mathbf{a}$ is given by $\sum_{i=1}^e a^i \mathbf{g}^i$.   
A semi-static strategy $(H, \mathbf{a})$ is a pair of $H \in \mathcal{A}$ and $\mathbf{a} \in \mathbb{R}^e$, and the corresponding wealth at time $T$ is
$$ H \cdot \mathbf{S}_T + \sum_{i=1}^e a^i \mathbf{g}^i  = \sum_{s = 1}^T H_s \Delta \mathbf{S}_s + \sum_{i=1}^e a^i\mathbf{g}^{i}.$$
\begin{definition}\label{defi:NRA_O}
	We say that the market satisfies the condition $NRA(\Theta)$ if for every self-financing strategy $H \in \mathcal{A}$ and for every $ \mathbf{a} \in \mathbb{R}^e$ such that
	\begin{eqnarray*}
		&&\forall \theta \in \Theta,  H \cdot \mathbf{S}_T + \sum_{i=1}^e a^i \mathbf{g}^i \ge 0, a.s. \nonumber \\
		&&\qquad \text{ then } \qquad \forall \theta \in \Theta,  H \cdot \mathbf{S}_T + \sum_{i=1}^e a^i \mathbf{g}^i  = 0, a.s.. 
	\end{eqnarray*}
\end{definition}
\begin{definition}
	A calibrated robust pricing system is a robust pricing system which is consistent with the option prices. We define
	$$\mathcal{Q}^{\theta}_{cal,e}= \{ \mathbf{Q} \in \mathcal{Q}^{\theta}: \mathbf{Q}(g^i(\mathbf{S}_T)) = 0, i = 1,...,e \}$$
	and $\mathcal{Q}_{cal,e} = \cup_{\theta \in \Theta} \mathcal{Q}^{\theta}_{cal,e}$. 
\end{definition}
\begin{theorem}\label{thm:ftap_o}
	Under the setting above
	\begin{itemize}
		\item[(a)] The following are equivalent:
		\subitem(i) $NRA(\Theta)$ holds.
		\subitem(ii) $ \forall \theta \in \Theta, \mathcal{Q}^{\theta}_{cal,e} \ne \emptyset$.
		\item[(b)] Let $NRA(\Theta)$ hold, and $\mathbf{f}$ be a random variable. The superhedging price is defined by
		$$\pi^e(\mathbf{f}):= \inf\{ x \in \mathbb{R}: \exists (H, \mathbf{a}) \in \mathcal{A} \times \mathbb{R}^e \text{ such that } x + H \cdot \mathbf{S}_T + \sum_{i=1}^e a^i \mathbf{g}^i \ge \mathbf{f}, a.s. \}.$$
		Then the superhedging duality holds
		$$ \pi^e(\mathbf{f}) = \sup_{\theta \in \Theta} \sup_{Q \in \mathcal{Q}^{\theta}_{cal,e}} \mathbf{Q}(\mathbf{f}),$$
		and there exits $(H, \mathbf{a}) \in \mathcal{A} \times \mathbb{R}^e$ such that  $$\pi^e(\mathbf{f}) + H \cdot \mathbf{S}_T + \sum_{i=1}^e a^i g^i(\mathbf{S}_T) \ge \mathbf{f}, a.s.$$ 
		\item[(c)] Let $NRA(\Theta)$ hold, and $\mathbf{f}$ be a random variable. The following are equivalent
		\subitem(i) $\mathbf{f}$ is replicable.
		\subitem(ii) The mapping $\mathbf{Q} \mapsto \mathbf{Q}(\mathbf{f})$ is constant on $\mathcal{Q}_{cal,e}.$  
	\end{itemize}
\end{theorem}
The proof of this theorem is given in Subsection \ref{proof:thm_ftap_o}. 
\subsection{Examples}

\subsubsection{Drift and volatility uncertainty}\label{ex:1}
We consider Example \ref{ex:toy} with one-period, i.e., $T = 1$, and $|\Theta| =2$. The dynamics of the risky asset {\color{red}is} given by 
$$ S^{\theta_i}= 1 + \sigma_i \omega + \mu_i$$
where $\mu_i \in \mathbb{R}, 0 < \sigma_i$ for $i \in \{1,2\}$. If there are no confusion, we simply write $i$ instead of $\theta_i$. The requirements for a robust pricing system $\mathbf{Q}$ are
$$\left\langle \mathbf{Q}, (H\Delta S^{i} - h^{i} )_{i \in \Theta} \right\rangle \le 0, \qquad \forall H \in \mathbb{R}, h^{i} \in \mathbb{R}_+,$$
and the normalizing condition $\mathbf{Q}(\mathbf{1}) = 1$. In this example, $\mathbf{Q} = (Z^{1}, Z^{2})$ and further computation lead to the following system of equations
\begin{eqnarray}
	Z^1(1)[\sigma_1 + \mu_1] + Z^2(1)[\sigma_2 + \mu_2] + Z^1(-1)[-\sigma_1 + \mu_1] + Z^2(-1)[-\sigma_2 + \mu_2] &=& 0, \nonumber\\
	Z^1(1) + Z^2(1) + Z^1(-1)+ Z^2(-1) &=& 1,	\label{eq:sys0}
\end{eqnarray}
where $Z^{i}(j)$ denotes the value of  $Z^i$ when $\omega = j$. This system admits a solution if 
\begin{eqnarray}
	\min\{ \sigma_1 + \mu_1, \sigma_2 + \mu_2,  -\sigma_1 + \mu_1, -\sigma_2 + \mu_2 \} &<& 0, \label{eq:1} \\
	\max\{ \sigma_1 + \mu_1, \sigma_2 + \mu_2,  -\sigma_1 + \mu_1, -\sigma_2 + \mu_2 \} &>& 0. \label{eq:2}
\end{eqnarray}
We consider the following particular cases.
\begin{enumerate}
	\item The case $\mu_1 > \sigma_1, 0 > -\sigma_2 > \mu_2$. In this case, the first dynamics increases while the second one decreases, and each of them admits arbitrage opportunities. However, there is no robust arbitrage. Indeed, if $H$ is a robust arbitrage then it should satisfy the following conditions
	$$H (\sigma_1 \omega + \mu_1) \ge 0, \qquad H (\sigma_2 \omega + \mu_2) \ge 0, \forall \omega \in \Omega, $$ 
	which imply that $H = 0.$ In other words, the condition $NRA(\Theta)$ holds true.
	\item The case $|\mu_i| < \sigma_i$ for each $i = 1,2$. As shown in \cite{rasonyi2018utility}, for $i=1,2$, the martingale measure
		\begin{equation}\label{eq:EMM}
			Q^{\theta_i}(\omega) = \frac{1}{2}\left( 1 - \omega \frac{\mu_i}{\sigma_i} \right)
		\end{equation}
		is unique for  $S^{\theta_i}.$
\end{enumerate}
We find the solutions to \eqref{eq:sys0}. Let $0 \le \alpha \le 1$ and $\beta \in \mathbb{R}$ be chosen later. Consider
	\begin{eqnarray}
		Z^1(1)[\sigma_1 + \mu_1] - Z^1(-1)[\sigma_1 - \mu_1] = \beta, \nonumber\\
		Z^1(1) + Z^1(-1) = \alpha, \label{eq:sys1}
	\end{eqnarray}
	and
	\begin{eqnarray}
		Z^2(1)[\sigma_2 + \mu_2] - Z^2(-1)[\sigma_2 - \mu_2] = -\beta, \nonumber\\
		Z^2(1) + Z^2(-1) = 1- \alpha. \label{eq:sys2}
	\end{eqnarray}
	Now we can solve explicitly
	\begin{eqnarray}
		Z^1(1) &=& \frac{1}{2 \sigma_1}\left[ \beta + \alpha(\sigma_1 - \mu_1) \right], \nonumber \\
		Z^1(-1) &=& \frac{1}{2 \sigma_1}\left[ -\beta + \alpha(\sigma_1 + \mu_1) \right], \nonumber\\
		Z^2(1) &=& \frac{1}{2 \sigma_2}\left[ -\beta + (1-\alpha)(\sigma_2 - \mu_2) \right], \nonumber\\
		Z^2(-1) &=& \frac{1}{2 \sigma_2}\left[ \beta + (1-\alpha)(\sigma_2 + \mu_2) \right].\label{eq:explicit_Q}
	\end{eqnarray}
	For each pair $(\alpha, \beta)$ such that
	\begin{eqnarray}
		0 &\le& \beta + \alpha (\sigma_1 - \mu_1) \le 2 \sigma_1, \nonumber\\
		0 &\le& -\beta + \alpha (\sigma_1 + \mu_1) \le 2 \sigma_1, \nonumber\\
		0 &\le& -\beta + (1-\alpha) (\sigma_2 - \mu_2) \le 2 \sigma_2, \nonumber\\
		0 &\le& \beta + (1-\alpha) (\sigma_2 + \mu_2) \le 2 \sigma_2  \label{eq:conditions}
	\end{eqnarray}
	we can construct a robust pricing system. The choice $\alpha = 1, \beta =0$ corresponds to the robust pricing system $(Q^{\theta_1},0)$ and the choice $\alpha = 0, \beta =0$ leads to $(0,Q^{\theta_2})$, where $Q^{\theta_i}, i=1,2$ are given in (\ref{eq:EMM}). Note that there are robust pricing systems different from linear combinations of $(Q^{\theta_1},0)$ and $(0,Q^{\theta_2})$.
	
	The parameters $\alpha$ and $(1-\alpha)$ are the weights put on the model $\theta_1$ and $\theta_2$, respectively. The parameter  $\beta$ controls the average of all outcomes under $\mathbf{Q}^1$ which can be positive (in the classical setting without uncertainty, this should be zero). However, this additional gain for the model $\theta_1$ is exactly compensated by an opposite gain $-\beta$ in the model $\theta_2$ so that there is no positive gain on average of all models.

In this example, we can compare the no robust arbitrage condition to other conditions introduced in Definition \ref{defi:swNA}. Under $NRA(\Theta)$, it may happen that $sNA(\Theta)$ and $wNA(\Theta)$ fail (the first case). The set $\{\text{Law}(S^{\theta_i}), i=1,2\}$ is not convex. However, using Lemma 3.2 of \cite{blanchard2019no}, we can always work with its convex hull $\mathcal{P} = Conv\{\text{Law}(S^{\theta_i}), i=1,2\}$, which is dominated by $\lambda \text{Law}(S^{\theta_1}) + (1-\lambda)\text{Law}(S^{\theta_2})$ for $\lambda \in [0,1]$. Therefore, $sNA(\mathcal{P})$ fails but $wNA(\mathcal{P})$ and $NA(\mathcal{P})$ holds true, see Theorem 3.6 of \cite{blanchard2019no}. The First Fundamental Theorem of \cite{bn15} is applied to the convex hull $\mathcal{P}$ and it can be checked that for any $\lambda \in (0,1)$ the canonical process $\widehat{S}$ satisfies the classical no arbitrage condition under the measure $\lambda\text{Law}(S^{\theta_1}) + (1-\lambda)\text{Law}(S^{\theta_2})$  and hence there exists a martingale measure $\widehat{Q}^{\lambda}$ for $\widehat{S}$. Compared to the results of  \cite{bn15}, a robust pricing system in the present paper can be interpreted as a martingale measure which is equivalent to a finite mixture of $\text{Law}(S^{\theta}), \theta \in \Theta$.

\subsubsection{Worst-case superhedging prices are not enough}\label{subsec:worstprice}
In the example of Subsection \ref{ex:1}, robust superhedging prices can be computed explicitly. Consider $\mu_1 = \mu_2 = 0$, $0 < \sigma_1 < \sigma_2 < 1$ and thus $P$ is the unique martingale measure for each $S^{i}, i = 1,2$. 

Consider a digital option $\mathbf{f} = (f^1,f^2)$ where $f^i = 1_{S^i_1 < 1}$. It can be checked that 
	$$\left(x^1, H^1\right) = \left( \frac{ 1}{2}, -\frac{1}{2\sigma_1} \right), \qquad \left(x^2, H^2\right) = \left( \frac{1}{2}, -\frac{1}{2\sigma_2} \right) $$
	are the superhedging prices and superhedging strategies of $f^1, f^2$ for the models $\theta_1, \theta_2$, respectively. 
	Now, we consider the robust superhedging problem for the claim $\mathbf{f}$, that is to solve the linear programming problem 
	\begin{eqnarray*}
		\min x, \text{ such that } && \\
		x + H \sigma_1 &\ge& 0,\\
		x - H \sigma_1 &\ge& 1,\\
		x + H \sigma_2 &\ge& 0,\\
		x - H \sigma_2 &\ge& 1.
	\end{eqnarray*}  
	This problem admits the solution $x= \frac{\sigma_2}{\sigma_2 + \sigma_1}$ and the corresponding strategy $H = -\frac{1}{\sigma_2 + \sigma_1}$. It is noticed that the worst-case superhedging price is $\max\{x^1,x^2\} = \frac{1}{2}$ 
	but there is no strategy $H$ that superhedges the claim $\mathbf{f}$ with this price, otherwise $H$ would be $H^1$ and $H^2$ simultaneously. This shows that the worst-case price is not enough for superhedging under uncertainty.
	
	The robust pricing systems are used to compute the robust superhedging price of $\mathbf{f}$ as follows
	\begin{eqnarray*}
		&&\max_{\mathbf{Q} \text{ satisfies } (\ref{eq:sys0})} \left( Z^1(1) \times 0 + Z^1(-1) \times 1 + Z^2(1) \times 0 + Z^2(-1) \times 1 \right) \\
		&=&\max_{\mathbf{Q} \text{ satisfies } (\ref{eq:sys0})} \left( Z^1(-1) + Z^2(-1)\right) \\
		&=& \frac{\sigma_2}{\sigma_2 + \sigma_1}.
	\end{eqnarray*}
	The maximum is attained when $Z^1(1) = 0, Z^1(-1) = \frac{\sigma_2}{\sigma_1 + \sigma_2}, Z^2(1) = \frac{\sigma_1}{\sigma_1 + \sigma_2}, Z^2(-1)  = 0$. This simple example shows that it is not necessary to require $Z^{\theta}_T >0, a.s.,$ a property obtained by the exhaustion argument.

\section{Proofs}\label{sec:main}
\subsection{Preliminary results}
In this subsection, we closely follow the predictable range approach, given in \cite{delbaen2006mathematics}. The idea will be explained after introducing necessary notations.
\begin{lemma}\label{lemma:projection}
	Let $(\Omega, \mathcal{F}, P)$ be a probability space and $\mathcal{E} \subset L^0(\mathcal{F},P, \mathbb{R}^d)$ a subspace closed with respect to convergence in probability. We assume that $\mathcal{E}$ satisfies the following stability property. If $f,g \in \mathcal{E}$ and $A \in \mathcal{F},$ then $f1_A + g1_{A^c} \in \mathcal{E}$. Under these assumptions, there exists an $\mathcal{F}$-measurable mapping $\mathfrak{P}$ taking value in the orthogonal projection in $\mathbb{R}^d$ such that $f \in \mathcal{E}$ if and only if $\mathfrak{P}f = f$.
\end{lemma}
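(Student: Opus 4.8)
The plan is to realize $\mathcal{E}$ as the space of (almost surely defined) measurable sections of a random subspace $R(\omega)\subseteq\mathbb{R}^d$: I would produce a measurable field $\omega\mapsto R(\omega)$ of linear subspaces such that $f\in\mathcal{E}$ if and only if $f(\omega)\in R(\omega)$ almost surely, and then let $\mathfrak{P}(\omega)$ be the orthogonal projection onto $R(\omega)$, which is automatically projection-valued. The hypothesis that really drives the argument is the stability property, which I would first recast as an $L^0$-module structure. Taking $g=0$ in the gluing property gives $f1_A\in\mathcal{E}$ for every $f\in\mathcal{E}$ and $A\in\mathcal{F}$; combining this with the vector-space structure yields $\lambda f\in\mathcal{E}$ for every simple scalar $\lambda$, and truncating a general $\lambda\in L^0$ and invoking closedness in probability extends this to all scalar multipliers. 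Thus $\mathcal{E}$ is a probability-closed $L^0$-module, the structural fact I will use throughout.

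To build $R$, for a finite family $F=(f_1,\dots,f_k)$ in $\mathcal{E}$ I set $L_F(\omega)=\mathrm{span}(f_1(\omega),\dots,f_k(\omega))$, with $\pi_F(\omega)$ the orthogonal projection onto it; both $\dim L_F$ and $\pi_F$ are $\mathcal{F}$-measurable (the latter via a pseudo-inverse expression in the measurable matrix of the $f_i$). Since concatenating families only enlarges spans, the family $\{E[\dim L_F]\}$ is upward directed and bounded by $d$. Choosing $F_n$ with $E[\dim L_{F_n}]$ increasing to the supremum and setting $G_n=F_1\cup\dots\cup F_n$, the dimensions $\dim L_{G_n}$ increase and, being integer-valued and bounded by $d$, stabilise pointwise almost surely at a subspace $R(\omega):=\bigcup_n L_{G_n}(\omega)$. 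Defining $\mathfrak{P}$ as the projection onto $R$, measurability follows since $\pi_{G_n}\to\mathfrak{P}$ pointwise. Maximality of $E[\dim\,\cdot\,]$ forces $f(\omega)\in R(\omega)$ a.s.\ for every $f\in\mathcal{E}$ (otherwise adjoining $f$ would strictly raise the expected dimension on a set of positive measure), which is exactly the implication $f\in\mathcal{E}\Rightarrow\mathfrak{P}f=f$.

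For the converse, suppose $\mathfrak{P}f=f$, i.e.\ $f(\omega)\in R(\omega)$ almost surely. A measurable Gram--Schmidt procedure applied to the generators $G_n$, carried out separately on each stratum $\{\dim R=j\}$, produces an orthonormal frame $e_1,\dots,e_d$ spanning $R(\omega)$ (with $e_i=0$ where $i>\dim R(\omega)$); crucially each $e_i$ lies in $\mathcal{E}$ because it is a finite $L^0$-combination of elements of $\mathcal{E}$ and $\mathcal{E}$ is an $L^0$-module. Then $f=\sum_{i=1}^d(f\cdot e_i)\,e_i$ with measurable scalar coefficients $f\cdot e_i$, so $f\in\mathcal{E}$ once more by the module property, completing the equivalence.

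I expect the main obstacle to be measurability bookkeeping rather than a conceptual gap: establishing rigorously that $\mathcal{E}$ is an $L^0$-module, verifying the pointwise stabilisation of $\dim L_{G_n}$, and executing the Gram--Schmidt selection measurably across the strata $\{\dim R=j\}$, so that every object produced is $\mathcal{F}$-measurable and genuinely lies in $\mathcal{E}$. The closedness of $\mathcal{E}$ in probability enters precisely in the passage from simple to general scalar multipliers, and it is this module property, together with the finite bound $\dim R\le d$ that keeps the exhaustion terminating, on which the whole argument rests.
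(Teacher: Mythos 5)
Your argument is correct and is essentially the standard proof of this result: the paper does not prove the lemma itself but cites Lemma 6.2.1 of Delbaen--Schachermayer, whose proof proceeds by the same exhaustion over the expected dimension of measurable spans, combined with the observation that the stability property upgrades $\mathcal{E}$ to a probability-closed $L^0(\mathcal{F})$-module and a measurable Gram--Schmidt selection of a frame. The measurability bookkeeping you flag (stratifying by the first index $N$ at which $\dim L_{G_N}$ reaches $\dim R$, so that each frame vector is on each stratum a finite $L^0$-combination of elements of $\mathcal{E}$) is routine and closes the remaining gaps.
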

\begin{proof}
	See Lemma 6.2.1 of \cite{delbaen2006mathematics}.
\end{proof}
We define the following closed subspaces of $L^0(\mathcal{F}_{t-1},P), 1\le t \le T,$
$$\mathcal{E}^{\theta}_{t} = \{ h \in L^0(\mathcal{F}_{t-1},P): h \Delta S^{\theta}_t = 0, a.s.\}$$
and $\mathcal{E}^{\Gamma}_t = \bigcap_{\theta \in \Gamma} \mathcal{E}^{\theta}_t$ for $\Gamma \subset \Theta$. Each $\mathcal{E}^{\theta}_{t}$ satisfies the assumptions in Lemma \ref{lemma:projection} and so $\mathcal{E}^{\Gamma}_t$ does. Note that $0 \in \mathcal{E}^{\Gamma}_t$.  By Lemma \ref{lemma:projection}, $\mathcal{E}^{\Gamma}_t$ can be described by a mapping $\mathfrak{P}^{\Gamma}_t$.  We define $\mathfrak{P}^{\Gamma,c}_t = Id - \mathfrak{P}^{\Gamma}_t$ and 
$$\mathcal{H}^{\Gamma}_t = \{ f:\Omega \to \mathbb{R}^d: f \text{  is $\mathcal{F}_{t-1}$-measurable and  } \mathfrak{P}^{\Gamma,c}_tf = f \}.$$
By Assumption \ref{assum:Theta} (ii), if $\Gamma$ is a dense subset of $\Theta$ then $\mathcal{E}^{\Gamma}_t = \mathcal{E}^{\Theta}_t$ and $\mathcal{H}^{\Gamma}_t = \mathcal{H}^{\Theta}_t$. We say that $H \in \mathcal{A}$ is in $\Gamma$-canonical form if $H_t \in \mathcal{H}^{\Gamma}_t, 1\le t \le T$. And if $\Gamma$ is dense in $\Theta$, we simply say $H$ is in canonical form. Note that the random variable $H_t = 0$ is in $\mathcal{E}_t$. 

In classical settings without uncertainty, to prove the closedness of the set $\mathbf{C}^{\{\theta\}}$, we have to establish certain boundedness and convergence of a sequence of trading strategies $H_n$ from the same properties of  the corresponding hedgeable payoffs $f_n = H_{n} \cdot S^{\theta}_T$. This is not always possible since the sequence of trading strategies $H_n + n(h_1,...,h_T)$ for some $0 \ne h_t \in \mathcal{E}^{\theta}_t, t = 1,..,T,$ generates the same sequence of payoffs, however, does not have the required boundedness and convergence properties. Therefore, we have to restrict our analysis to non-redundant strategies, that are strategies in the canonical form. In Lemma \ref{lem:bounded_converge}, boundedness and convergence of trading strategies and the corresponding payoffs are given for one step models. Proposition \ref{pro:stricker} proves the closedness of the set $\mathbf{K}^{\Gamma}$, extending  Stricker's lemma. Proposition \ref{pro:crucial_bound} establishes the boundedness of trading strategies from the boundedness of terminal payoffs for multiple period models. We refer to \cite{delbaen2006mathematics} for further explanation of the predictable range approach.  
\begin{lemma}\label{lem:bounded_converge}
	Let $\Gamma$ be a dense subset of $\Theta$.	Let $(H_n)_{n \in \mathbb{N}} \in \mathcal{H}_t$ be a sequence in canonical form. It holds that
	\begin{itemize}
		\item[(i)] $(H_n)_{n \in \mathbb{N}}$ is a.s. bounded if and only if for all $\theta \in \Gamma$, $(H_n \Delta S^{\theta}_{t})_{n \in \mathbb{N}}$ is a.s. bounded.
		\item[(i')] Assume in addition that $NRA(\Theta)$ holds. Then $(H_n)_{n \in \mathbb{N}}$ is a.s. bounded if and only if for all $\theta \in \Gamma$, $(H_n \Delta S^{\theta}_{t})^-_{n \in \mathbb{N}}$ is a.s. bounded.
		\item[(ii)] $(H_n)_{n \in \mathbb{N}}$ converges a.s. if and only if for all $\theta \in \Gamma$, $(H_n \Delta S^{\theta}_{t})_{n \in \mathbb{N}}$ does.
	\end{itemize}
\end{lemma}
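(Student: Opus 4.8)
The plan is to prove the easy implications by continuity and the hard ones by a normalization-and-compactness argument that exploits the canonical form. In each of (i), (i') and (ii) the direction ``control of $(H_n)$ $\Rightarrow$ control of the products'' is immediate: since $\Delta S^{\theta}_t$ is a.s.\ finite, multiplication by it is a.s.\ continuous, so a.s.\ boundedness (resp.\ convergence) of $(H_n)_n$ passes to $(H_n\Delta S^{\theta}_t)_n$, and boundedness of a product trivially bounds its negative part. Thus all the content lies in the reverse implications.

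For the reverse implication of (i), I would argue by contradiction. Suppose $(H_n\Delta S^{\theta}_t)_n$ is a.s.\ bounded for every $\theta\in\Gamma$ but $(H_n)_n$ is not, so that $A:=\{\sup_n|H_n|=\infty\}$ has $P(A)>0$; note $A\in\mathcal{F}_{t-1}$ since each $H_n$ is $\mathcal{F}_{t-1}$-measurable. Using a standard measurable (randomized) subsequence extraction I first pass to random indices along which $|H_n|\to\infty$ on $A$, then normalize $G_n:=H_n/|H_n|$ on $\{|H_n|>0\}$ and $G_n:=0$ otherwise. Because $\mathfrak{P}^{\Gamma,c}_t$ acts pointwise and linearly and hence commutes with multiplication by $\mathcal{F}_{t-1}$-measurable scalars, each $G_n$ stays in $\mathcal{H}^{\Gamma}_t$; moreover $|G_n|\le 1$, so compactness of the unit ball in $\mathbb{R}^d$ together with a further measurable extraction yields $G_n\to G$ a.s.\ with $G$ being $\mathcal{F}_{t-1}$-measurable and $|G|=1$ on $A$. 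Pointwise continuity of $\mathfrak{P}^{\Gamma,c}_t$ gives $G\in\mathcal{H}^{\Gamma}_t$, while for each $\theta\in\Gamma$ one has $G\Delta S^{\theta}_t=\lim_n (H_n\Delta S^{\theta}_t)/|H_n|=0$ on $A$, the numerator being bounded and $|H_n|\to\infty$. Hence $\tilde G:=G1_A\in\mathcal{H}^{\Gamma}_t$ (using the $\mathcal{F}_{t-1}$-measurable multiplier $1_A$) and $\tilde G\Delta S^{\theta}_t=0$ for all $\theta\in\Gamma$, i.e.\ $\tilde G\in\mathcal{E}^{\Gamma}_t$. Since $f\in\mathcal{E}^{\Gamma}_t$ is characterised by $\mathfrak{P}^{\Gamma}_t f=f$ and $f\in\mathcal{H}^{\Gamma}_t$ by $\mathfrak{P}^{\Gamma}_t f=0$, we get $\mathcal{E}^{\Gamma}_t\cap\mathcal{H}^{\Gamma}_t=\{0\}$, so $\tilde G=0$, contradicting $|\tilde G|=1$ on $A$ with $P(A)>0$.

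For (i') I would run the identical normalization under the weaker hypothesis that only the negative parts $(H_n\Delta S^{\theta}_t)^-$ are bounded. The same computation now gives only $G\Delta S^{\theta}_t\ge 0$ on $A$ for every $\theta\in\Gamma$, since the bounded negative part divided by $|H_n|\to\infty$ vanishes. The new ingredient is to upgrade this sign condition from $\Gamma$ to all of $\Theta$: for $\theta\in\Theta$ choose $\theta_j\in\Gamma$ with $\theta_j\to\theta$, apply Assumption \ref{assum:Theta}(ii) to extract a subsequence with $\Delta S^{\theta_{j_i}}_t\to\Delta S^{\theta}_t$ a.s., and let $i\to\infty$ in $\tilde G\Delta S^{\theta_{j_i}}_t\ge 0$ to obtain $\tilde G\Delta S^{\theta}_t\ge 0$ for all $\theta\in\Theta$. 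The one-step strategy equal to $\tilde G$ at time $t$ and $0$ elsewhere then lies in $\mathcal{A}$ and satisfies $H\cdot S^{\theta}_T=\tilde G\Delta S^{\theta}_t\ge 0$ for all $\theta$, so $NRA$ forces $\tilde G\Delta S^{\theta}_t=0$ for all $\theta$, whence $\tilde G\in\mathcal{E}^{\Gamma}_t\cap\mathcal{H}^{\Gamma}_t=\{0\}$, the desired contradiction. This is the step where Assumption \ref{assum:Theta}(ii) and $NRA$ are genuinely used.

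Finally (ii) reduces to (i) plus the same orthogonality. A convergent product sequence is bounded, so by (i) $(H_n)_n$ is a.s.\ bounded; suppose it fails to converge on a set $C$ with $P(C)>0$. Extracting (measurably) a subsequence $H_{\tau_k}\to G'$ and, along indices where $|H_n-G'|$ stays bounded away from $0$ on $C$, a further subsequence $H_{\sigma_k}\to G''$, I obtain $G',G''\in\mathcal{H}^{\Gamma}_t$ with $|G'-G''|>0$ on a positive-probability set. But for each $\theta\in\Gamma$ both $G'\Delta S^{\theta}_t$ and $G''\Delta S^{\theta}_t$ equal $\lim_n H_n\Delta S^{\theta}_t$, so $G'-G''\in\mathcal{E}^{\Gamma}_t\cap\mathcal{H}^{\Gamma}_t=\{0\}$, contradicting $G'\neq G''$. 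I expect the main obstacle throughout to be the measurable selection of convergent random subsequences, so that the limits are bona fide $\mathcal{F}_{t-1}$-measurable random variables and the canonical-form/orthogonality relations survive passage to the limit, together with the density-plus-Assumption \ref{assum:Theta}(ii) upgrade from $\Gamma$ to $\Theta$ in (i').
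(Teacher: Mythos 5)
Your proposal is correct and follows essentially the same route as the paper: contradiction, measurable extraction of a divergent subsequence, normalization to a unit-norm limit in canonical form, the orthogonality $\mathcal{E}^{\Gamma}_t\cap\mathcal{H}^{\Gamma}_t=\{0\}$, the density-plus-Assumption \ref{assum:Theta}(ii) upgrade from $\Gamma$ to $\Theta$, and the $NRA$ condition to pass from $(\cdot)^-=0$ to $=0$ in (i'). The measurable-selection steps you flag as the main obstacle are exactly what the paper delegates to Propositions 6.3.3 and 6.3.4 of \cite{delbaen2006mathematics}.
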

\begin{proof}
	The ``only if" directions are obvious. It suffices to prove the ``if" directions.
	
	$(i):$ Assume that for each $\theta \in \Gamma$, $(H_n \Delta S^{\theta}_{t})_{n \in \mathbb{N}}$ is a.s. bounded. We prove $(H_n)_{n \in \mathbb{N}}$ is a.s. bounded, too. If this is not the case, by Proposition
	6.3.4 (i) of \cite{delbaen2006mathematics}, there is a measurably parameterised subsequence $(L_k)_{k \in \mathbb{N}} = (H_{\tau_k})_{k \in \mathbb{N}}$ such that $L_k$ diverges to $\infty$ on a set $B$ of positive measure. Note that $L_k, k \in \mathbb{N}$ are in canonical form. Let $\widehat{L}_k = \frac{L_k}{\|L_k\|} 1_{ B \cap \|L_k\| \ge 1}$. By passing to another measurably parameterised subsequence we may assume that $\widehat{L}_k \to \widehat{L}$, which is of canonical form and satisfies $\widehat{L} = 1$ on $B$. By assumption, it holds that $\widehat{L}_k \Delta S^{\theta}_{t} \to 0, a.s.$ for all $\theta \in \Gamma$. Consequently, $\widehat{L} \Delta S^{\theta}_{t} = 0,  a.s.$ for all $\theta \in \Gamma$, and thus for all $\theta \in \Theta$, by Assumption \ref{assum:Theta} (ii), which means that $\widehat{L}  \in \mathcal{E}^{\Theta}_t$. Therefore, $\widehat{L} \in \mathcal{E}^{\Theta}_t \cap \mathcal{H}^{\Theta}_t = \{ 0\}$, which is a contradiction.
	
	$(i'):$ We proceed as in $(i)$, noting that 
	$$(\widehat{L}_k \Delta S^{\theta}_{t})^- = \lim_{k \to \infty} (\widehat{L}_k \Delta S^{\theta}_{t})^- = 0, a.s., \qquad \forall \theta \in \Gamma$$
	and thus for all $\theta \in \Theta$, by Assumption \ref{assum:Theta} (ii). By $NRA(\Theta)$, we get that $\widehat{L} \Delta S^{\theta}_{t} = 0, a.s.$ for all $\theta \in \Theta$, which again implies $\widehat{L} = 0$, a contradiction.
	
	$(ii):$  We also prove by contradiction. Assume that $(H_n)_{n \in \mathbb{N}}$ does not converge a.s.. By $(i)$, we may assume that $(H_n)_{n \in \mathbb{N}}$ is a.s. bounded. Proposition 6.3.3 of \cite{delbaen2006mathematics} implies there is a measurably parameterised subsequence $(H_{\tau_k})_{k \in \mathbb{N}}$ converging to $H_0,$ a.s.. Applying Proposition 6.3.4 (ii) of \cite{delbaen2006mathematics} with $f_0 = H_0$, there is another measurably parameterised subsequence $(H_{\sigma_k})_{k \in \mathbb{N}}$ converging to $\widehat{H}_0,$ a.s. for which $P[H_0 \ne \widehat{H}_0] >0$. Note also that $H_0, \widehat{H}_0$ are in canonical form. We have
	$$(H_0 - \widehat{H}_0) \Delta S^{\theta}_{t} = \lim_{k \to \infty}  H_{\tau_k} \Delta S^{\theta}_{t} - \lim_{k \to \infty} H_{\sigma_k} \Delta S^{\theta}_{t} =0, a.s.,\qquad \forall \theta \in \Gamma,$$
	and hence for all $\theta \in \Theta$, by Assumption \ref{assum:Theta} (ii),
	which implies a contradiction.
\end{proof}
We extend Stricker's lemma, see \cite{stricker} and also \cite{schachermayer1992} for a proof, noting that the condition $NRA(\Theta)$ is not used here.
\begin{proposition}\label{pro:stricker}
	Let $\Gamma \subset \Theta$ be a countable index set. The vector space $$\mathbf{K}^{\Gamma} = \left\lbrace  \left( \sum_{t=1}^T H_t \Delta S^{\theta}_t\right)_{\theta \in \Gamma}, H \in \mathcal{A} \right\rbrace $$ is closed in $\prod_{\theta \in \Gamma}L^{0}(\mathcal{F}_T, P)$.
\end{proposition}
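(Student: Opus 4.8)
The plan is to mimic the classical inductive proof of Stricker's lemma (Chapter~6 of \cite{delbaen2006mathematics}), exploiting the countability of $\Gamma$ to reduce everything to sequences and the $\Gamma$-canonical form to remove redundant strategies. Since $\Gamma$ is countable, $\prod_{\theta\in\Gamma}L^0(\mathcal{F}_T,P)$ is a countable product of Fr\'echet spaces, hence metrizable, so topological closedness coincides with sequential closedness. I would therefore take a sequence $\mathbf{W}^n=(H^n\cdot S^\theta_T)_{\theta\in\Gamma}\in\mathbf{K}^\Gamma$ converging to some $\mathbf{W}$ and aim to show $\mathbf{W}\in\mathbf{K}^\Gamma$. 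Convergence in the product means $H^n\cdot S^\theta_T\to W^\theta$ in probability for each $\theta$, and since $\Gamma$ is countable a diagonal extraction yields a single subsequence along which $H^n\cdot S^\theta_T\to W^\theta$ a.s.\ for \emph{all} $\theta\in\Gamma$ simultaneously. Replacing each $H^n_t$ by $\mathfrak{P}^{\Gamma,c}_tH^n_t$ leaves every $H^n\cdot S^\theta_T$, $\theta\in\Gamma$, unchanged, so I may assume that each $H^n$ is in $\Gamma$-canonical form.

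I would then argue by induction on the number of trading dates. The base case $T=1$ is the one-period convergence statement: if $(H^n_1)$ lies in $\mathcal{H}^\Gamma_1$ and $(H^n_1\Delta S^\theta_1)_n$ converges a.s.\ for every $\theta\in\Gamma$, then $(H^n_1)$ converges a.s. This is the $\Gamma$-indexed analogue of Lemma~\ref{lem:bounded_converge}(ii); its proof is identical, the only change being that the limiting normalized strategy $\widehat{L}$ satisfies $\widehat{L}\Delta S^\theta_1=0$ for $\theta\in\Gamma$ and hence lies in $\mathcal{E}^\Gamma_1\cap\mathcal{H}^\Gamma_1=\{0\}$ --- here I use only the projection identity, not Assumption~\ref{assum:Theta}(ii), which is exactly why $NRA$ plays no role. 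The limit $H_1$ is again in $\mathcal{H}^\Gamma_1$ and yields $W^\theta=H_1\Delta S^\theta_1$, so $\mathbf{W}\in\mathbf{K}^\Gamma$.

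For the inductive step I would split $\Omega$ according to the behaviour of the first-step strategies. Set $\Omega_1=\{\liminf_n\|H^n_1\|<\infty\}\in\mathcal{F}_0$. Using the measurably parameterised subsequence results (Propositions~6.3.3 and 6.3.4 of \cite{delbaen2006mathematics}) I extract a subsequence along which $H^n_1\to H_1$ a.s.\ on $\Omega_1$, while $\|H^n_1\|\to\infty$ on $\Omega_1^c$. On $\Omega_1$ the residual integrals $\sum_{t=2}^T H^n_t\Delta S^\theta_t=W^{n,\theta}-H^n_1\Delta S^\theta_1$ converge a.s.\ for every $\theta\in\Gamma$; localising the induction hypothesis --- the proposition for the $(T-1)$-date model on times $2,\dots,T$, with strategies multiplied by the $\mathcal{F}_0$-measurable indicator $1_{\Omega_1}$ --- expresses the limit as a $(T-1)$-step integral, and together with $H_1$ this realises $\mathbf{W}$ on $\Omega_1$. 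On $\Omega_1^c$ I would normalise the whole strategy by $\|H^n_1\|$: the rescaled strategies stay in $\Gamma$-canonical form, their first steps have norm $1$ and admit an a.s.\ convergent subsequence with limit $\widehat{H}_1$ of norm $1$, while their total integrals tend to $0$; applying the induction hypothesis once more produces a $T$-step strategy with \emph{vanishing} payoff for all $\theta\in\Gamma$ but $\widehat{H}_1\neq 0$ on $\Omega_1^c$.

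The main obstacle is precisely this last point. In the one-period case a nonzero canonical strategy with zero payoff is immediately contradictory, but over several periods the identity $\widehat{H}_1\Delta S^\theta_1=-\sum_{t\ge 2}\check{H}_t\Delta S^\theta_t$ does not force $\widehat{H}_1\in\mathcal{E}^\Gamma_1$, so no direct contradiction is available. I would resolve this by a finite descent: since $\widehat{H}_1$ is a nonzero $\mathbb{R}^d$-valued vector on $\Omega_1^c$, I subtract an $\mathcal{F}_0$-measurable multiple of this null-payoff strategy from $H^n$ to annihilate one coordinate direction of $H^n_1$ there, strictly lowering the measurable rank of the space of ``unbounded directions'' of the first-step strategies while leaving every payoff unchanged. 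As this rank is bounded by $d$, after at most $d$ iterations the first-step strategies are a.s.\ bounded on the remaining part of $\Omega_1^c$, which returns us to the bounded case already treated. Patching the representations obtained on $\Omega_1$ and on $\Omega_1^c$ then yields a single predictable $H$ with $\mathbf{W}=(H\cdot S^\theta_T)_{\theta\in\Gamma}$. Keeping the bookkeeping of this descent measurable, and verifying that each reduction genuinely decreases the number of unbounded directions, is the delicate part of the argument.
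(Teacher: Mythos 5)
Your argument is correct, and its skeleton (metrizability of the countable product, extraction of a single subsequence with a.s.\ convergence for all $\theta\in\Gamma$ via Lemma~\ref{lem:subseq}, reduction to $\Gamma$-canonical form, induction on the number of trading dates with the one-period case handled by the analogue of Lemma~\ref{lem:bounded_converge}(ii)) matches the paper's. Where you genuinely diverge is in the inductive step: the paper does not partition $\Omega$ and perform an index descent; instead it applies Lemma~\ref{lemma:projection} a second time to the closed stable subspace $F_1=I_1^{-1}(\mathbf{K}^{\Gamma}_2\cap I_1(\mathcal{H}_1))$ and restricts the first-period strategies to the complementary space $E_1$ of (\ref{eq:E1}), i.e.\ it discards exactly those first-step positions whose one-step payoff can be replicated by trading on dates $2,\dots,T$. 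With that normalization, a normalized limit $\psi_1$ of an unbounded sequence has $(-1_A\psi_1\Delta S^{\theta}_1)_{\theta}\in\mathbf{K}^{\Gamma}_2$ by closedness of $\mathbf{K}^{\Gamma}_2$, which combined with $\psi_1\in E_1$ forces $\psi_1=0$; so the first-step strategies are a.s.\ bounded outright and no descent is needed. Your route is instead the Kabanov--Stricker scheme --- split $\Omega$ into $\{\liminf\|H^n_1\|<\infty\}$ and its complement, produce a null-payoff strategy with unit-norm first leg on the bad set, and subtract $\mathcal{F}_0$-measurable multiples of it to kill coordinates of $H^n_1$ one at a time, the termination being guaranteed by the index of the matrix of first-step strategies being bounded by $d$. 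This is precisely the method the paper itself uses for the closedness of $\mathbf{C}^{\Gamma}$ in Lemma~\ref{lemma:close_dense}, so it is certainly viable here; the only point to watch is the one you already flag, namely that the vanishing rows of $\mathbf{H}_1$ remain vanishing after the subtraction (which holds because the limiting direction $\widehat{H}_1$ inherits any identically zero coordinate), so the index strictly increases. What the paper's $E_1$-decomposition buys is a cleaner argument with a unique decomposition $\mathbf{f}=(H_1\Delta S^{\theta}_1)_{\theta}+\mathbf{f}_2$ and no measurable bookkeeping of the descent; what yours buys is that it avoids constructing the second projection $\mathfrak{P}_0$ and works with completely general predictable strategies, and it correctly uses that for $\mathbf{K}$ (unlike for $\mathbf{C}$) the null-payoff identity is an exact equality supplied by the induction hypothesis, so no $NRA$ assumption enters --- consistent with the paper's remark preceding the proposition.
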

\begin{proof}
	The case $T=1$:  we use Lemma \ref{lem:subseq} in Appendix, subsection \ref{sec:app_prod} and then Lemma \ref{lem:bounded_converge} (ii). Let us suppose that assertion holds true for $T-1$, and fix the horizon $T$. By the inductive hypothesis, the set 
	$$\mathbf{K}^{\Gamma}_2 = \left\lbrace  \left( \sum_{t=2}^T H_t \Delta S^{\theta}_t\right)_{\theta \in \Gamma} , H \in \mathcal{A} \right\rbrace $$
	is closed in  $\prod_{\theta \in \Gamma}L^{0}(\mathcal{F}_T, P)$. 
	
	Let $\mathcal{H}_1$ be the set of strategies in canonical form defined as before. Let $I_1$ be the linear mapping 
	\begin{eqnarray*}
		\mathcal{H}_1 &\to& \prod_{\theta \in \Gamma}L^{0}(\mathcal{F}_T, P) \\
		H_1 &\mapsto& (H_1 \Delta S^{\theta}_1)_{\theta \in \Gamma}.
	\end{eqnarray*}
	Note that $I_1$ is continuous and injective. Indeed, let $H_1$ and $H'_1$ in $L^0(\mathcal{F}_0)$ be such that $H_1 \Delta S^{\theta}_1 = H'_1 \Delta S^{\theta}_1$ for all $\theta \in \Gamma$. Then $H_1 - H'_1 \in \mathcal{E}^{\theta}_1$ for all $\theta \in \Gamma$ and hence $H_1 - H'_1 \in \mathcal{E}^{\Gamma}_1$. Because $H_1 -H'_1 \in \mathcal{H}^{\Gamma}_1 $, we deduce that $H_1 -H'_1 \in \mathcal{H}^{\Gamma}_1 \cap \mathcal{E}^{\Gamma}_1 = \{0\}$. Therefore,  $I_1$ is injective.
	
	Let $F_1 = (I_1)^{-1}(\mathbf{K}^{\Gamma}_2 \cap I_1(\mathcal{H}_1))$ be a subset of $\mathcal{H}_1$. Since $\mathbf{K}^{\Gamma}_2$ is closed, the set $F_1$ is a closed subspace of $\mathcal{H}_1$. We prove that $F_1$ is stable in the sense of Lemma \ref{lemma:projection}. Let $H_1,G_1 \in F_1$ and $A \in \mathcal{F}_0$. We have that
		\begin{eqnarray*}
		f^{\theta} = H_1\Delta S^{\theta}_1  = \sum_{t=2}^T H_t \Delta S^{\theta}_t, \qquad  g^{\theta} = G_1\Delta S^{\theta}_1  = \sum_{t=2}^T G_t \Delta S^{\theta}_t,
		\end{eqnarray*}
	for some strategies $H, G \in \mathcal{A}$ and $\mathbf{f} = (f^{\theta})_{\theta \in \Gamma}, \mathbf{g} = (g^{\theta})_{\theta \in \Gamma} \in \mathbf{K}^{\Gamma}_2 \cap I_1(\mathcal{H}_1)$.  The strategy $(H_t1_A + G_t1_{A^c})_{t \in \{1,...,T\} } \in \mathcal{A}$ generates
	$$f^{\theta}1_A + g^{\theta}1_{A^c} = (H_1\Delta1_A + G_11_{A^c})  S^{\theta}_1 = \sum_{t=2}^T (H_t1_A + G_t1_{A^c}) \Delta S^{\theta}_t,$$
	which implies that $H_11_A + G_11_{A^c} \in F_1.$

	 By Lemma \ref{lemma:projection}, there is an $\mathcal{F}_0$-measurable mapping $\mathfrak{P}_0$ so that $f \in F_1$ if and only if $\mathfrak{P}_0 f = f$. Define
	\begin{equation}\label{eq:E1}
	E_1 = \{ H_1 \in \mathcal{H}_1: \mathfrak{P}_0H_1 = 0 \}.
	\end{equation}
	Since every $H_1 \in E_1 \subset  \mathcal{H}_1$ is of canonical form,  and the integral $(H_1 \Delta S^{\theta}_1)_{\theta \in \Gamma}$ with $H_1 \in E_1, H_1 \ne 0$ is not in $\mathbf{K}^{\Gamma}_2$, otherwise $H_1 \in F_1 \cap E_1 = \{0\}$. Furthermore,
	$$\mathbf{K}^{\Gamma} =\left\lbrace \left( \sum_{t=1}^T H_t \Delta S^{\theta}_t\right)_{\theta \in \Gamma}, H \in \mathcal{A}, H_1 \in E_1 \right\rbrace $$  
	and the decomposition of elements $\mathbf{f} \in \mathbf{K}^{\Gamma}$ into $\mathbf{f}= (H_1 \Delta S^{\theta}_1)_{\theta \in \Gamma} + \mathbf{f}_2, H_1 \in E_1, \mathbf{f}_2 \in \mathbf{K}^{\Gamma}_2$ is unique.
	
	Let $\mathbf{f}_n = (H_{n,1} \Delta S^{\theta}_1)_{\theta \in \Gamma} + \mathbf{f}_{2,n}, n \in \mathbb{N}$ be a sequence in $\mathbf{K}^{\Gamma}$ with $H_{n,1} \in E_1, \mathbf{f}_{2,n} \in \mathbf{K}^{\Gamma}_2$ such that $\mathbf{f}_n \to \mathbf{f}$ in $\prod_{\theta \in \Gamma}L^{0}(\mathcal{F}_T, P)$. We prove that $\mathbf{f} \in \mathbf{K}^{\Gamma}$. By Lemma \ref{lem:subseq} in Appendix, subsection \ref{sec:app_prod}, we find a subsequence, still denoted by $n$, such that $f^{\theta}_n \to f^{\theta}, a.s.$ for all $\theta \in \Gamma$. First we will show that $(H_{n,1})_{n \in \mathbb{N}}$ is a.s. bounded. Let $A = \{ \omega: \limsup_{n \to \infty} \|H_{n,1} \| = \infty \}$. By Proposition 6.3.4 of \cite{delbaen2006mathematics}, there exists an
	$\mathcal{F}_0$-measurable subsequence $(\tau_k)_{k \in \mathbb{N}}$ such that $H_{\tau_k,1} \to \infty$ on $A$. If $P[A] > 0$,  we apply Proposition 6.3.3 of \cite{delbaen2006mathematics} and assume that $\frac{H_{\tau_k,1}}{\|H_{\tau_k,1}\|} \to \psi_1, a.s.$ on the set $A$, where $\psi_1 = \psi_1 1_A \in E_1$, since $E_1$ is closed and stable in the sense of Lemma \ref{lemma:projection}. Clearly $\|\psi_1\| = 1$ on $A$. We have that for every $\theta \in \Gamma$,
	$$\left( \frac{H_{\tau_k,1}}{\|H_{\tau_k,1}\|} \Delta S^{\theta}_1\right)1_A + \frac{f^{\theta}_{2,\tau_k}}{\|H_{\tau_k,1}\|}1_A  \to 0, a.s. $$
	It follows that $\frac{f^{\theta}_{2,\tau_k}}{\|H_{\tau_k,1}\|}1_A \to - 1_A \psi_1 \Delta S^{\theta}_1$. By the closedness of $\mathbf{K}^{\Gamma}_2$, we obtain that $( - 1_A \psi_1 \Delta S^{\theta}_1)_{\theta \in \Gamma} \in \mathbf{K}^{\Gamma}_2$. Since $\psi_1 \in E_1$, this implies that $\psi_1 \Delta S^{\theta}_1 = 0$ for all $\theta \in \Theta$  and hence $\psi_1=0$. This is a contradiction to $\|\psi_1\| = 1$ on $A$ with $P[A >0]$.
	
	So $(H_{n,1})_{n \in \mathbb{N}}$ is bounded a.s. and there exists an
		$\mathcal{F}_0$-measurable sequence $H_{\tau_k,1}$ converging a.s. to $H_1$. This means that $\mathbf{f}_{2,\tau_k} \to \mathbf{f} - (H_1 \Delta S^{\theta}_1)_{\theta \in \Gamma}$. By the closedness of $\mathbf{K}^{\Gamma}_2$, we get $\mathbf{f} - (H_1 \Delta S^{\theta}_1)_{\theta \in \Gamma} \in \mathbf{K}^{\Gamma}_2$. The proof is complete.
\end{proof}
Next, we prove a crucial boundedness property by extending Lemma \ref{lem:bounded_converge} (i').
\begin{proposition}\label{pro:crucial_bound}
	Let $NRA(\Theta)$ hold and $\Gamma$ be a dense subset of $\Theta$. Let $(H_n)_{n \in \mathbb{N}}$ be a sequence of strategies in canonical form such that $(H_n \cdot S^{\theta}_T)^- $ is bounded a.s. for every $\theta \in \Gamma$. Then $(H_n)_{n \in \mathbb{N}} = (H_{n,1},...,H_{n,T})_{n \in \mathbb{N}}$ is bounded a.s..
\end{proposition}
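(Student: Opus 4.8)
The plan is to argue by induction on the horizon $T$, the base case $T=1$ being exactly Lemma \ref{lem:bounded_converge} (i'). For the inductive step I would first reduce the claim to boundedness of the \emph{first} component $(H_{n,1})_{n\in\mathbb{N}}$. Indeed, once that is known, subadditivity of the negative part gives $\big(\sum_{t=2}^T H_{n,t}\Delta S^{\theta}_t\big)^- \le (H_n\cdot S^{\theta}_T)^- + \|H_{n,1}\|\,\|\Delta S^{\theta}_1\|$, which is a.s. bounded for every $\theta\in\Gamma$ since $\Delta S^{\theta}_1$ is a fixed random variable and $(H_{n,1})$ is bounded. The tail $(H_{n,2},\dots,H_{n,T})$ is a strategy in canonical form for the sub-market over $\{1,\dots,T\}$, which still satisfies $NRA$ (embed a tail strategy by setting $H_1=0$), so the inductive hypothesis for this $(T-1)$-horizon market bounds the tail, and hence the whole sequence.

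To bound $(H_{n,1})$ I would argue by contradiction. Suppose $A_1=\{\limsup_n\|H_{n,1}\|=\infty\}\in\mathcal{F}_0$ has positive measure. Using the measurably parametrised subsequence extractions of \cite{delbaen2006mathematics} exactly as in the previous proofs, I pass to a subsequence with $\|H_{n,1}\|\to\infty$ on $A_1$ and normalise, putting $\hat H_{n,t}=1_{A_1}H_{n,t}/\|H_{n,1}\|$. Because $1_{A_1}$ and $\|H_{n,1}\|$ are $\mathcal{F}_0$-measurable, this preserves predictability and the canonical form, while $\|\hat H_{n,1}\|=1$ on $A_1$ and $(\hat H_n\cdot S^{\theta}_T)^- = 1_{A_1}(H_n\cdot S^{\theta}_T)^-/\|H_{n,1}\|\to 0$ a.s. for each $\theta\in\Gamma$. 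Since $\|\hat H_{n,1}\|\le 1$, the normalised tail again has a.s. bounded negative part, so the inductive hypothesis makes it a.s. bounded; extracting a further measurably parametrised convergent subsequence yields $\hat H_{n,t}\to\hat H_t$ for all $t$, with $\hat H$ in canonical form and $\|\hat H_1\|=1$ on $A_1$.

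Passing to the limit in the payoff and using that the negative parts vanish gives $\hat H\cdot S^{\theta}_T\ge 0$ for every $\theta\in\Gamma$, hence for every $\theta\in\Theta$ by Assumption \ref{assum:Theta} (ii); $NRA$ then forces $\hat H\cdot S^{\theta}_T=0$ for all $\theta$. In particular, reading off the first period, $(\hat H_1\Delta S^{\theta}_1)_{\theta\in\Gamma}=-\big(\sum_{t=2}^T\hat H_t\Delta S^{\theta}_t\big)_{\theta\in\Gamma}$ is literally a tail payoff, so it lies in $\mathbf{K}^{\Gamma}_2$ (no closedness argument is needed here, since the limiting tail strategy $\hat H_2,\dots,\hat H_T$ is at hand).

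The concluding contradiction is the step I expect to be the main obstacle, and it is where $NRA$ and the canonical form must be combined precisely as in Proposition \ref{pro:stricker}. The relation $\hat H_1\Delta S^{\theta}_1\in\mathbf{K}^{\Gamma}_2$ says that the first-period position $\hat H_1$ can be reproduced by trading from time $1$ on; the non-redundancy of the canonical first step, i.e. membership of $\hat H_1$ in the space $E_1$ of Proposition \ref{pro:stricker} on which the payoff map is injective modulo $\mathbf{K}^{\Gamma}_2$, then forces $\hat H_1\Delta S^{\theta}_1=0$ for all $\theta\in\Gamma$, hence for all $\theta\in\Theta$, that is $\hat H_1\in\mathcal{E}^{\Theta}_1\cap\mathcal{H}^{\Theta}_1=\{0\}$. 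This contradicts $\|\hat H_1\|=1$ on the positive-measure set $A_1$ and closes the induction. I expect the delicate routine points to be the measurability bookkeeping that keeps $A_1\in\mathcal{F}_0$ and preserves the canonical form under normalisation, together with the density passage from $\Gamma$ to $\Theta$ via Assumption \ref{assum:Theta} (ii); the genuinely essential ingredient, however, is the non-redundancy encoded in the canonical form, since as elementary two-period examples show, a vanishing terminal payoff alone does not force the first position to be zero.
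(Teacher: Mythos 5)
Your proposal is correct and follows essentially the same route as the paper's proof: induction on $T$, normalisation by $\|H_{n,1}\|$ on the divergence set $A_1\in\mathcal{F}_0$, the inductive hypothesis applied to the normalised tail (whose negative part is controlled by $\|\Delta S^{\theta}_1\|$ plus the vanishing normalised payoff), passage to the limit combined with Assumption \ref{assum:Theta} (ii) and $NRA$ to conclude $\hat H_1\Delta S^{\theta}_1\in\mathbf{K}^{\Gamma}_2$, and finally the $E_1$ non-redundancy from Proposition \ref{pro:stricker} to force $\hat H_1=0$, contradicting $\|\hat H_1\|=1$ on $A_1$. The only step you leave implicit is the initial reduction to $H_{n,1}\in E_1$ (needed so that the a.s.\ limit $\hat H_1$ itself lies in $E_1$), which the paper likewise dispatches in one sentence by invoking the unique decomposition established in Proposition \ref{pro:stricker}.
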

\begin{proof}
The proof follows exactly as the proof of Proposition 6.9.1 (ii) of \cite{delbaen2006mathematics}, noting that the classical arbitrage is replaced by the $NRA(\Theta)$, using Assumption \ref{assum:Theta} (ii). 
\end{proof}
\subsection{The closedness of $\mathbf{C}^{\Theta}$}
First, we prove the following.
\begin{lemma}\label{lemma:close_dense}
	For any countable and dense subset $\Gamma \subset \Theta$, 
	$$
	\mathbf{C}^{\Gamma} = \left\lbrace  (H \cdot S^{\theta}_T - h^{\theta})_{\theta \in \Gamma}, H \in \mathcal{A}, h^{\theta} \in L^0_+(\mathcal{F}_T,P) \right\rbrace $$
	is closed in $\prod_{\theta \in \Gamma} L^0(\mathcal{F}_T,P)$.
\end{lemma}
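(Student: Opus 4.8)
The plan is to run the classical Schachermayer-type closedness argument, now in the product space, using Proposition \ref{pro:crucial_bound} as the key engine. Since $\Gamma$ is countable, the product topology on $\prod_{\theta \in \Gamma} L^0(\mathcal{F}_T,P)$ is metrizable and convergence there amounts to coordinatewise convergence in probability. So I would start with a sequence $\mathbf{g}_n = (H_n \cdot S^{\theta}_T - h^{\theta}_n)_{\theta \in \Gamma} \in \mathbf{C}^{\Gamma}$, with $H_n \in \mathcal{A}$ and $h^{\theta}_n \in L^0_+(\mathcal{F}_T,P)$, converging to some $\mathbf{g} = (g^{\theta})_{\theta \in \Gamma}$, and aim to show $\mathbf{g} \in \mathbf{C}^{\Gamma}$. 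First I would replace each $H_n$ by its $\Gamma$-canonical version; since $\Gamma$ is dense this leaves every integral $H_n \cdot S^{\theta}_T$, $\theta \in \Gamma$, unchanged and places us in the framework of Proposition \ref{pro:crucial_bound}. Then, invoking Lemma \ref{lem:subseq}, I would pass to a subsequence (not relabeled) along which $g^{\theta}_n \to g^{\theta}$ a.s. for every $\theta \in \Gamma$ simultaneously.

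The heart of the argument is to bound the strategies. Because $h^{\theta}_n \ge 0$, we have $H_n \cdot S^{\theta}_T = g^{\theta}_n + h^{\theta}_n \ge g^{\theta}_n$, whence $(H_n \cdot S^{\theta}_T)^- \le (g^{\theta}_n)^-$ for each $\theta \in \Gamma$. An a.s. convergent sequence is a.s. bounded, so $(g^{\theta}_n)^-$ is a.s. bounded, and therefore $(H_n \cdot S^{\theta}_T)^-$ is a.s. bounded for every $\theta \in \Gamma$. This is exactly the hypothesis of Proposition \ref{pro:crucial_bound}, which under $NRA$ and the density of $\Gamma$ yields that $(H_n)_{n \in \mathbb{N}}$ is a.s. bounded.

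With $(H_n)_{n \in \mathbb{N}}$ a.s. bounded and in canonical form, I would apply the measurable subsequence extraction of Proposition 6.3.3 of \cite{delbaen2006mathematics}, successively over the time steps $t = 1,\dots,T$, to obtain a measurably parameterised subsequence $(H_{\tau_k})_{k \in \mathbb{N}}$ converging a.s. to a predictable strategy $H \in \mathcal{A}$. Then $H_{\tau_k} \cdot S^{\theta}_T \to H \cdot S^{\theta}_T$ a.s. for all $\theta$, so for each $\theta \in \Gamma$ the remainders $h^{\theta}_{\tau_k} = H_{\tau_k} \cdot S^{\theta}_T - g^{\theta}_{\tau_k}$ converge a.s. to $h^{\theta} := H \cdot S^{\theta}_T - g^{\theta}$. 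Since each $h^{\theta}_{\tau_k} \ge 0$, the limit satisfies $h^{\theta} \ge 0$, i.e. $h^{\theta} \in L^0_+(\mathcal{F}_T,P)$. Hence $g^{\theta} = H \cdot S^{\theta}_T - h^{\theta}$ for every $\theta \in \Gamma$, which shows $\mathbf{g} \in \mathbf{C}^{\Gamma}$ and completes the proof.

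I expect the main obstacle to be precisely the uniform control of the strategies $(H_n)_{n \in \mathbb{N}}$ obtained from one-sided bounds on the payoffs: this is where $NRA$ enters in an essential way, and it is the content of Proposition \ref{pro:crucial_bound}. Once that boundedness is secured, the measurable subsequence extraction and the fact that $L^0_+(\mathcal{F}_T,P)$ is closed under a.s. limits make the remaining steps routine.
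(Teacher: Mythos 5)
Your proof is correct, but it takes a genuinely different route from the paper's. The paper proves this lemma by the Kabanov--Stricker ``teacher's note'' argument: a double induction (over the number of periods and over the number of identically vanishing rows of the random matrix $\mathbf{H}_1$ of first-period strategies), splitting $\Omega$ into the set where $\liminf_n\|H_{n,1}\|<\infty$ and its complement, normalizing by $1+\|H_{n,1}\|$ on the latter, and explicitly eliminating the degenerate directions by subtracting off the limiting null strategy $(G_1,\tilde H_2,\dots,\tilde H_T)$; $NRA$ enters there to force the limiting normalized payoff to vanish. You instead pass to $\Gamma$-canonical form, invoke Proposition \ref{pro:crucial_bound} to get global a.s.\ boundedness of the whole strategy sequence from the one-sided payoff bounds, and then conclude by measurable subsequence extraction and closedness of $L^0_+$ under a.s.\ limits --- the route of Theorem 6.9.2 in \cite{delbaen2006mathematics}. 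Your argument is shorter and reuses machinery the paper needs anyway (Proposition \ref{pro:crucial_bound} is invoked in Step 2 of Proposition \ref{pro:closedness}); the paper's version keeps the closedness lemma self-contained with respect to the predictable-range/canonical-form apparatus and handles redundant directions explicitly rather than projecting them away up front. Note that both proofs use $NRA$ even though the lemma's statement does not list it as a hypothesis (the paper's proof invokes it to obtain (\ref{eq:key}), yours via Proposition \ref{pro:crucial_bound}); since the lemma is only applied inside Proposition \ref{pro:closedness}, where $NRA$ is assumed, this is harmless, but it is worth being aware that your proof, like the paper's, does not establish the statement as literally written without that condition.
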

\begin{proof}
Since $L^0(\mathcal{F}_{T},P)$ is a metric space, the product space $\prod_{\theta \in \Gamma} L^0(\mathcal{F}_{T},P)$ is metrizable, see Theorem 3.36 of \cite{guide2006infinite}.
Let 
	$$f^{\theta}_n = H_n \cdot S^{\theta}_T - h^{\theta}_n \to f^{\theta}, \text{  in } L^0(\mathcal{F}_T,P) \qquad \theta \in \Gamma.$$
	We prove that $(f^{\theta})_{\theta \in \Gamma} \in \mathbf{C}^{\Gamma}$. By Lemma \ref{lem:subseq} in Appendix, subsection \ref{sec:app_prod}, we may assume that $f^{\theta}_n \to f^{\theta}, a.s., \forall \theta \in \Gamma.$  The proof continues exactly the same as the standard argument in \cite{kabanov2001teacher}.
\end{proof}
\begin{proposition}\label{pro:closedness}
	Let Assumption \ref{assum:Theta} be in force. Assume that the condition $NRA(\Theta)$ holds. Then the set $\mathbf{C}^{\Theta}$ is closed in $\mathbf{L}^0(\mathcal{F}_T,P) $ with respect to the product topology, that is if $(\mathbf{f}_{\alpha})_{\alpha \in I}$ be a net in $\mathbf{C}^{\Theta}$ and $\mathbf{f}_{\alpha} \to \mathbf{f}$ for some $\mathbf{f} \in \mathbf{L}^0(\mathcal{F}_T,P) $, then $\mathbf{f} \in \mathbf{C}^{\Theta}$.
\end{proposition}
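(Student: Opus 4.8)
The plan is to transfer the countable result of Lemma \ref{lemma:close_dense} to the uncountable product by combining metrizability on a countable dense skeleton with the global boundedness of Proposition \ref{pro:crucial_bound}. Since $\Theta$ is separable, fix a countable dense $\Gamma\subset\Theta$. The projection $\pi_\Gamma:\mathbf{L}^0(\mathcal{F}_T,P)\to\prod_{\theta\in\Gamma}L^0(\mathcal{F}_T,P)$ is continuous, and $\pi_\Gamma(\mathbf{C})=\mathbf{C}^\Gamma$, so $\pi_\Gamma^{-1}(\mathbf{C}^\Gamma)$ is a closed set containing $\mathbf{C}$; hence any net limit $\mathbf{f}$ of $\mathbf{C}$ already satisfies $\mathbf{f}|_\Gamma\in\mathbf{C}^\Gamma$ by Lemma \ref{lemma:close_dense}. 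This yields, for free, a canonical strategy superhedging $(f^\theta)_{\theta\in\Gamma}$. The real content of the theorem is therefore the behaviour at the uncountably many coordinates $\theta\notin\Gamma$: membership in $\mathbf{C}$ requires \emph{one} strategy $H$ with $f^\theta\le H\cdot S^\theta_T$ for \emph{all} $\theta$, and $\mathbf{f}|_\Gamma\in\mathbf{C}^\Gamma$ does not by itself force this.

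The decisive structural feature I would exploit is that a single integrand governs all models simultaneously. Concretely, I would realise the hedging strategy as an almost sure limit of the net's \emph{own} strategies. Using convergence of $(\mathbf{f}_\alpha)$ in the metrizable space $\prod_{\theta\in\Gamma}L^0$, pass to an increasing sequence $\alpha_n$ with $\mathbf{f}_{\alpha_n}\to\mathbf{f}$ there, write $\mathbf{f}_{\alpha_n}=(H_n\cdot S^\theta_T-h^\theta_n)_\theta$ with each $H_n$ in canonical form, and, after Lemma \ref{lem:subseq}, assume $f^\theta_n\to f^\theta$ a.s.\ for every $\theta\in\Gamma$. Then $(H_n\cdot S^\theta_T)^-\le (f^\theta_n)^-$, which is a.s.\ bounded for each $\theta\in\Gamma$, so Proposition \ref{pro:crucial_bound} forces $(H_n)$ to be a.s.\ bounded; the Delbaen--Schachermayer selection (Propositions 6.3.3--6.3.4 of \cite{delbaen2006mathematics}) then gives a measurably parameterised subsequence with $H_n\to G$ a.s., where $G$ is predictable and in canonical form. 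Because $G$ acts through the same integrand on every model, $H_n\cdot S^\theta_T=\sum_t H_{n,t}\Delta S^\theta_t\to G\cdot S^\theta_T$ a.s.\ for \emph{every} $\theta\in\Theta$ at once; on $\Gamma$ this already gives $h^\theta:=G\cdot S^\theta_T-f^\theta=\lim_n h^\theta_n\ge 0$.

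It remains to secure $f^\theta\le G\cdot S^\theta_T$ for $\theta\notin\Gamma$, and this is the main obstacle. The difficulty is precisely the non-sequential nature of the product topology over the uncountable index $\Theta$: the sequence $\alpha_n$ was chosen to control the countably many coordinates in $\Gamma$ and need not be cofinal in the directed index set, so it need not reproduce the net value $f^\theta$ at an outside coordinate; a cofinal subnet \emph{would} inherit the inequality at every $\theta$, but cofinal subnets clash with the intrinsically sequential Komlós/Delbaen--Schachermayer selection that is needed to keep the limiting strategy finite and measurable. I would bridge this using Assumption \ref{assum:Theta}(ii): the map $\theta\mapsto G\cdot S^\theta_T$ is continuous into $L^0$ (if $\theta_j\to\theta$ then a subsequence gives $S^{\theta_{j_k}}_t\to S^\theta_t$ a.s., whence $G\cdot S^{\theta_{j_k}}_T\to G\cdot S^\theta_T$ a.s.), so the global object $\theta\mapsto G\cdot S^\theta_T$ is determined by the dense skeleton $\Gamma$. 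For a fixed $\theta\notin\Gamma$ I would re-run the argument of the previous paragraph with the enlarged countable dense set $\Gamma\cup\{\theta\}$ — legitimate since Lemma \ref{lemma:close_dense} and Proposition \ref{pro:crucial_bound} hold for any countable dense index set — to obtain $f^\theta\le G\cdot S^\theta_T$ directly, the point requiring care being that the strategy produced for the enlargement is compatible with the global $G$; here $NRA$ enters, excluding the degenerate blow-up directions exactly as in Proposition \ref{pro:crucial_bound}.

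Assembling the pieces, setting $h^\theta:=G\cdot S^\theta_T-f^\theta$ for all $\theta\in\Theta$, the closedness of $L^0_+(\mathcal{F}_T,P)$ yields $h^\theta\ge 0$ throughout, so $\mathbf{f}=G\cdot\mathbf{S}_T-\mathbf{h}\in\mathbf{C}$. I expect the genuinely hard step to be the reconciliation in the third paragraph: guaranteeing that the single strategy $G$, extracted by a sequential procedure tied to $\Gamma$, simultaneously superhedges at every $\theta\in\Theta$. What makes this feasible, and what I would emphasise, is the combination of the one-integrand continuity from Assumption \ref{assum:Theta}(ii) — which lets a countable dense set pin down the global hedging surface — with $NRA$ through Proposition \ref{pro:crucial_bound}, which keeps the limiting integrand finite and thereby prevents the net from escaping $\mathbf{C}$ at an uncountable coordinate.
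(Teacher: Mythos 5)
Your first two paragraphs faithfully reproduce Steps 1--2 of the paper's argument: reduce to a countable dense $\Gamma$ where the product is metrizable, apply Lemma \ref{lemma:close_dense} to a sequence extracted from the net, and use Proposition \ref{pro:crucial_bound} to keep the strategies a.s.\ bounded. The gap is in your third paragraph, and it is exactly the point the proposition exists to settle. Re-running the extraction with $\Gamma\cup\{\theta\}$ produces, for each $\theta\notin\Gamma$, \emph{some} strategy $G^{(\theta)}$ in canonical form with $G^{(\theta)}\cdot S^{\theta'}_T\ge f^{\theta'}$ for $\theta'\in\Gamma\cup\{\theta\}$; it does not produce the same $G$ you extracted from $\Gamma$ alone, and nothing forces it to. Two strategies dominating $\mathbf{f}$ on $\Gamma$ can differ (the slack $h^{\theta'}=G\cdot S^{\theta'}_T-f^{\theta'}$ is not pinned down), and since the coordinates $f^{\theta}$ for $\theta\notin\Gamma$ of the net limit are \emph{not} determined by continuity from the values on $\Gamma$ (only $\theta\mapsto G\cdot S^{\theta}_T$ is continuous in the sense of Assumption \ref{assum:Theta}(ii); $\theta\mapsto f^{\theta}$ is an arbitrary element of the product and need not be), a given $\theta_0\notin\Gamma$ may be superhedged by $G^{(\theta_0)}$ but not by your $G$. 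Invoking $NRA$ here does not help: through Proposition \ref{pro:crucial_bound} it only controls blow-up of integrands, not the selection among the many admissible limits. So you end with an uncountable family of constraints, each individually satisfiable, and no mechanism to satisfy them simultaneously.

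The paper closes precisely this hole with a compactness argument that your proposal omits. It forms, for each $\theta$, the closed convex set $\mathcal{H}_{\{\theta\}}\cap\mathcal{H}_{\Gamma}$ of canonical-form strategies dominating $f^{\theta'}$ for $\theta'\in\Gamma\cup\{\theta\}$; your enlargement argument is in effect the proof that these sets have the finite intersection property. The additional ingredient is that $\mathcal{H}_{\Gamma}$ is bounded in probability (your Step 2 observation, via Proposition \ref{pro:crucial_bound}) and hence \emph{convexly compact} in the sense of Definition \ref{defi:convex_compact} (Theorem 3.1 of \cite{gordan}, implemented in the paper via Lemma 2.1 of \cite{pratelli} applied to nets of forward convex combinations indexed by $Fin(I')$). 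Convex compactness upgrades the finite intersection property to $\bigcap_{\theta\in\Theta}\mathcal{H}_{\{\theta\}}\ne\emptyset$, which is the single global strategy you need. Without this step --- or some substitute for it --- the proof does not go through; with it, your sequential extraction of a particular $G$ becomes unnecessary.
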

\begin{proof}
	Let $\mathbf{f}_{\alpha}$ be a net in $\mathbf{C}^{\Theta}$, i.e., $\mathbf{f}_{\alpha} = H_{\alpha} \cdot  \mathbf{S}_{T} - \mathbf{h}_{\alpha}$ for some $H_{\alpha} \in \mathcal{A}, \mathbf{h}_{\alpha} \in \mathbf{L}^0_+(\mathcal{F}_{T},P)$ and $\mathbf{f}_{\alpha} \to \mathbf{f}$ in the product topology. For every $\theta \in \Theta$, 
	\begin{equation}\label{eq:con}
	H_{\alpha} \cdot S^{\theta}_{T} - h^{\theta}_{\alpha} \to f^{\theta},   \text{ in } L^0(\mathcal{F}_{T},P).
	\end{equation}
	We need to show that $\mathbf{f} = (H \cdot \mathbf{S})_T - \mathbf{h}$ for some $H \in \mathcal{A}$ and $\mathbf{h} \in \mathbf{L}^0_+(\mathcal{F}_T,P)$. 
	
	\noindent\textit{\underline{Step 1} (Finite intersection property)} Let $Fin(\Theta)$ be the set of all non-empty finite subsets of $\Theta$. Let $D \in Fin(\Theta)$ be arbitrary and denote
	$$\mathbb{H}_D = \{ H \in \mathcal{A}: H \text{  is in $D$-canonical form, } H \cdot S^{\theta}_{T} \ge f^{\theta},  a.s., \forall \theta \in D \}.$$
	It is easy to see that $\mathbb{H}_D $ is convex and closed with respect to the topology of convergence in probability. We will prove a finite intersection property, that is 
	\begin{equation}\label{eq:FIP}
	\mathbb{H}_D= \bigcap_{\theta \in D}\mathbb{H}_{\{\theta\}} \ne \emptyset.
	\end{equation}
	By Assumption \ref{assum:Theta} (i), there exists a sequence $(\theta_n)_{n \in \mathbb{N}} \subset \Theta$ which is dense in $\Theta$. We use the density of $\Gamma = (\theta_n)_{n \in \mathbb{N}} \cup D $ to prove a stronger statement
	\begin{equation}\label{eq:FIP2}
	\mathbb{H}_{\Gamma} \ne \emptyset.
	\end{equation}
	The set $\Gamma$ is also countable and the product space $\prod_{\Gamma} L^0(\mathcal{F}_{T},P)$ is metrizable. From (\ref{eq:con}) we can find a sequence $(\alpha_n)_{n \in \mathbb{N}} \subset I$, which will be denoted by $(n)_{n\in \mathbb{N}}$ without causing any confusion, such that
	$$f^{\theta}_n = H_{n} \cdot S^{\theta}_{T} - h^{\theta}_{n} \to f^{\theta}, \text{  in } L^0(\mathcal{F}_T,P)  \qquad  \forall \theta \in \Gamma.$$
	
	By Lemma \ref{lemma:close_dense}, there exist $H \in \mathcal{A}, h^{\theta} \in L^0_+(\mathcal{F}_T,P)$ such that $f^{\theta} = H \cdot S^{\theta}_T - h^{\theta}$, for $\theta \in \Gamma$, or equivalently, (\ref{eq:FIP2}) holds true.
	
	\noindent \textit{\underline{Step 2} (Boundedness of $\mathbb{H}_{\Gamma}$)}
	We prove by contradiction that for each $t=1,...,T$, the convex set $\{ H_t, H \in \mathbb{H}_{\Gamma} \}$ is bounded in probability. If this is not the case, there are $\alpha > 0$ and a sequence $(H_n)_{n \in \mathbb{N}}$ in $\mathbb{H}_{\Gamma}$ such that for each $n \in \mathbb{N}$, 
	$$P[\|H_{n,t}\| \ge n] \ge \alpha - 1/n.$$
	Since $H_n \cdot S^{\theta}_T$ is bounded from below by $f^{\theta}$ for each $\theta \in \Gamma$, Proposition \ref{pro:crucial_bound} implies that $H_n$ is a.s. bounded, which is a contradiction.
	
	\noindent \textit{\underline{Step 3} (Convex compactness of $\mathbb{H}_{\Gamma}$)}  Let $I'$ be an arbitrary set and $(F_{i})_{i \in I'}$ a family of closed and convex subsets of $\mathbb{H}_{\Gamma}$. Assume that
	\begin{equation}\label{eq:fi}
	\forall D \in Fin(I'), \qquad G_D = \bigcap_{i \in D} F_i \ne \emptyset.
	\end{equation}
	We will prove that
	$$\bigcap_{i \in I'} F_i \ne \emptyset.$$
	Due to Equation (\ref{eq:fi}), for each $D \in Fin(I')$ we can choose $H_{D} \in G_D.$ Consider the net $(H^+_D - H^-_D)_{D \in Fin(I')}$. By Lemma 2.1 of \cite{pratelli}, for every $D \in Fin(I)$, there exists $\widetilde{H}^-_D \in conv\{ H^-_E, E \ge D \}$ such that the net $(\widetilde{H}^-_{D,t})_{D \in Fin(I')}$ converges in measure to a nonnegative real-valued random variable $H^-_t$, for $t \in \{1,...,T\}$. It should be emphasized from Step 2 that $conv \{ H^-_{E,t}, E \ge D \}$ is bounded in probability. Using the same weights as in the construction of $\widetilde{H}^-_D$, we obtain $\widetilde{H}^+_D$. Again, Lemma 2.1 of \cite{pratelli} and Step 2 imply that there exist $\widehat{H}^+_D \in  conv\{ \widetilde{H}^+_E, E \ge D \}$ such that $\widehat{H}^+_{D,t}$ converges to a nonnegative real-valued random variable $H^+_t$. Repeating this argument for each $t =1,...,T$ yields a strategy $(H_t)_{t=1,...,T}$. It is clear that $\widehat{H}_{D,t} \to H_t$ for $t \in \{1,...,T\}$, and $H \in \bigcap_{i \in I'} F_i$, which implies the desired convex compactness. 
	
	\noindent \textit{\underline{Step 4}} Finally, we have
	$$ \emptyset \ne \bigcap_{\theta \in \Theta} \left( \mathbb{H}_{\{\theta\}} \cap \mathbb{H}_{\Gamma}\right) \subset \bigcap_{\theta \in \Theta} \mathbb{H}_{\{\theta\}},$$
	and the proof is complete.
\end{proof}
\subsection{Proof of Theorem \ref{thm:robustftap}}
\begin{proof}[Proof of Theorem \ref{thm:robustftap}]
We may assume $S^{\theta}_t, \theta \in \Theta$ are in $L^1(\mathcal{F}_t,P)$, up to an equivalent measure change $\frac{dP_1}{dP} = c \exp( - sup_{\theta \in \Theta, t} \|S^{\theta}_t\| )$, where $c$ is a normalization constant.
	
	$(i) \implies (ii)$. By Proposition \ref{pro:closedness}, the set $\mathbf{C}^{\Theta}$ is closed in the product space $\mathbf{L}^0(\mathcal{F}_{T},P)$ and hence the convex set $\mathbf{C}^{\Theta} \cap \mathbf{L}^1(\mathcal{F}_{T},P)$  is closed in the product space $\mathbf{L}^1(\mathcal{F}_{T},P)$, too. Fix $\theta \in \Theta$ and $A \in \mathcal{F}_T$ with $P[A]>0$ arbitrarily. Since $NRA(\Theta)$ holds, the closed convex cone $\mathbf{C}^{\Theta} \cap \mathbf{L}^1(\mathcal{F}_{T},P)$ and the compact set $\{1_{A} \mathbf{1}^{\theta}\}$ are disjoint. The Hahn-Banach theorem implies that there is 
	$$\mathbf{Q} \in \left( \mathbf{L}^1(\mathcal{F}_{T},P)\right)^* = \bigoplus_{\theta \in \Theta}L^{\infty}(\mathcal{F}_{T},P),$$ 
	such that 
	\begin{equation}\label{eq:G}
		\sup_{H \in \mathcal{A}, \mathbf{h} \in \mathbf{L}^0_+(\mathcal{F}_{T},P)} \mathbf{Q}\left(H \cdot \mathbf{S}_{T} - \mathbf{h} \right) \le \alpha < \beta \le \mathbf{Q}(1_{A} \mathbf{1}^{\theta}) .
	\end{equation}
We can identify $\mathbf{Q}$ with a finite vector of  continuous linear functions on $L^1(\mathcal{F}_{T},P)$, that is 
$$\mathbf{Q}(\mathbf{f}) = \sum^{}_{\theta' \in \Theta} E[Z^{\theta'}_Tf^{\theta'}]$$
where $Z^{\theta'}_T = 0$ for all but a finite number of $\theta'$. As $\mathbf{0} \in \mathbf{C}^{\Theta}$, we have that $\alpha \ge 0$ 	and $\mathbf{Q}(1_{A} \mathbf{1}^{\theta}) > 0$. Since $L^0(\mathcal{F}_{T},P)$ is a linear space, it necessarily holds that $\alpha = 0$,
	\begin{equation}\label{eq:hb}
	\mathbf{Q}\left( H \cdot \mathbf{S}_{T} \right) = 0,  \qquad \forall H \in \mathcal{A},
	\end{equation}
	and $Z^{\theta'}_T \ge 0$ for all $\theta' \in \Theta$ with $E[1_{A} Z^{\theta}_T>0] >0$. For any $A_{t-1} \in \mathcal{F}_{t-1}$, $1_{A_{t-1}}(\mathbf{S}_t - \mathbf{S}_{t-1}) \in \mathbf{C}^{\Theta}$, and (\ref{eq:hb}) implies that $\mathbf{Q}(1_{A_{t-1}}(\mathbf{S}_t - \mathbf{S}_{t-1}) ) = 0$, or equivalently, we obtain the generalized martingale property.

	$(ii) \implies (i)$. Assume there exists a robust arbitrage strategy $H \in \mathcal{A}$, that is $H \cdot S^{\theta'}_T \ge 0$ for all $\theta' \in \Theta$ and $P[H \cdot S^{\theta}_T > 0] >0$ for some $\theta.$ By $(ii)$, there exists $\mathbf{Q} \in \mathcal{Q}^{\theta}$ with $E[Z^{\theta}_T 1_{H \cdot S^{\theta}_T >0} ]>0,$ and thus 
	$$\mathbf{Q}(H \cdot \mathbf{S}_T) > 0.$$
	However, Lemma \ref{lem:mart_1} (iii) in Appendix, subsection \ref{sec:generalized} implies that $H \cdot \mathbf{S}_T$ is a generalized martingale under $\mathbf{Q}$ and then $\mathbf{Q}(H \cdot \mathbf{S}_T) = 0$, which is a contradiction.
\end{proof}	
In order to obtain robust pricing systems for each $t$, we proceed as follows. We define 
\begin{equation}
\mathbf{Z}_t = (E[Z^{\theta}_T|\mathcal{F}_t])_{\theta \in \Theta}.
\end{equation}
Then $\mathbf{Z}_t\mathbf{S}_t$ is a martingale under $P$. Indeed, we compute for any $0 \le s \le t \le T, A_s \in \mathcal{F}_s$ that
\begin{eqnarray*}
	\sum_{\theta \in \Theta} E[(Z^{\theta}_t S^{\theta}_t - Z^{\theta}_s S^{\theta}_s )1_{A_s}] &=& \sum_{\theta \in \Theta} E[Z^{\theta}_t S^{\theta}_t1_{A_s}] - \sum_{\theta \in \Theta} E[Z^{\theta}_s S^{\theta}_s 1_{A_s}]\\
	&=& \sum_{\theta \in \Theta} E[Z^{\theta}_T S^{\theta}_t1_{A_s}] - \sum_{\theta \in \Theta} E[Z^{\theta}_T S^{\theta}_s 1_{A_s}] = 0.
\end{eqnarray*}
\begin{remark}
	For each $\theta \in \Theta, A \in \mathcal{F}_T $, it is only required that there exists a robust pricing system with  density $Z^{\theta}_T > 0$ on $A$. Here, we do not use the typical exhaustion argument to obtain the density $Z^{\theta}_T >0, P-a.s.$. The situation is discussed in more details in Subsection
	\ref{subsec:worstprice}. 
\end{remark}
\subsection{Proof of Theorem \ref{thm:superhedge}}
($\ge$) Let $x \in \mathbb{R}$ be such that $x + H \cdot  \mathbf{S}_T \ge \mathbf{f}, a.s.$ for some $H \in \mathcal{A}$. Then for an arbitrary $\mathbf{Q} \in \mathcal{Q}$, it holds that $x \ge \mathbf{Q}(\mathbf{f})$ by Lemma \ref{lem:mart_1} (iii) in Appendix, and therefore  $x \ge \sup_{\mathbf{Q} \in \mathcal{Q}} \mathbf{Q}(\mathbf{f})$.
	
\noindent ($\le$) Take $x < \pi(\mathbf{f})$. Consequently, we have $\mathbf{f} - x \mathbf{1}  \notin \mathbf{C}^{\Theta}.$ Since $\mathbf{C}^{\Theta}$ is closed, the Hahn-Banach theorem implies there exists $\mathbf{Q} \in \bigoplus_{\theta \in \Theta}L^{\infty}(\mathcal{F}_{T},P)$ such that for all $\mathbf{h} \in \mathbf{C}^{\Theta} $
	$$ \mathbf{Q}\left( \mathbf{h} \right) < \mathbf{Q}\left( \mathbf{f} - x \mathbf{1} \right).$$
	Since $\mathbf{C}^{\Theta}$ is a convex cone containing $-\mathbf{L}_+$, we have that
	$\mathbf{Q}\left( \mathbf{h} \right) \le 0$ for all $\mathbf{h} \in \mathbf{C}^{\Theta}$ and $\mathbf{Q}\left( \mathbf{f} - x \mathbf{1} \right) > 0.$ 
Normalizing $\mathbf{Q}$ so that $\mathbf{Q}(\mathbf{1}) = 1$, we obtain that $x \le \sup_{\mathbf{Q} \in \mathcal{Q}} \mathbf{Q}(\mathbf{f})$.
\subsection{Proof of Theorem \ref{thm:complete}}

\begin{proposition}\label{pro:complete}
	Let $NRA(\Theta)$ hold and $\mathbf{f} \ge 0$ be a contingent claim. The following are equivalent
	\begin{itemize}
		\item[(i)] $\mathbf{f}$ is replicable.
		\item[(ii)] The mapping 
		\begin{eqnarray*}
			\mathcal{Q} &\to& \mathbb{R}\\
			\mathbf{Q} &\mapsto& \mathbf{Q}(\mathbf{f})
		\end{eqnarray*}  
		is constant.
	\end{itemize}
\end{proposition}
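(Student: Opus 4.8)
The plan is to prove the two implications separately, the forward one being a direct computation and the reverse one resting on the superhedging duality of Theorem \ref{thm:superhedge} combined with the $NRA$ property.

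For $(i)\Rightarrow(ii)$, suppose $\mathbf{f}=x\mathbf{1}+H\cdot\mathbf{S}_T$ for some $x\in\mathbb{R}$ and $H\in\mathcal{A}$. For any robust pricing system $\mathbf{Q}\in\mathcal{Q}$, linearity gives $\mathbf{Q}(\mathbf{f})=x\,\mathbf{Q}(\mathbf{1})+\mathbf{Q}(H\cdot\mathbf{S}_T)$. Property (ii) of Definition \ref{def:pricing_func} gives $\mathbf{Q}(\mathbf{1})=1$, and applying Lemma \ref{lem:mart_1} to both $H$ and $-H$ forces $\mathbf{Q}(H\cdot\mathbf{S}_T)=0$ from the generalized martingale property (iii). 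Hence $\mathbf{Q}(\mathbf{f})=x$ for \emph{every} $\mathbf{Q}$, so the map is constant.

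For $(ii)\Rightarrow(i)$, assume $\mathbf{Q}\mapsto\mathbf{Q}(\mathbf{f})$ is constant with common value $c\in\mathbb{R}$. Since $NRA$ holds, Theorem \ref{thm:robustftap} gives $\mathcal{Q}^{\theta}\neq\emptyset$ for all $\theta$, so $\mathcal{Q}\neq\emptyset$ and $c$ is well defined; moreover $\sup_{\theta}\sup_{\mathbf{Q}\in\mathcal{Q}^{\theta}}\mathbf{Q}(\mathbf{f})=c<\infty$ and, by constancy, $\sup_{\theta}\sup_{\mathbf{Q}\in\mathcal{Q}^{\theta}}\mathbf{Q}(-\mathbf{f})=-c<\infty$. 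I would then invoke Theorem \ref{thm:superhedge} twice. Applied to $\mathbf{f}$ it yields $\pi(\mathbf{f})=c$ and a strategy $H^{+}\in\mathcal{A}$ with $c+H^{+}\cdot\mathbf{S}_T\ge\mathbf{f}$ a.s.; applied to $-\mathbf{f}$ it yields $\pi(-\mathbf{f})=-c$ and a strategy $H^{-}\in\mathcal{A}$ with $-c+H^{-}\cdot\mathbf{S}_T\ge-\mathbf{f}$ a.s., that is $c-H^{-}\cdot\mathbf{S}_T\le\mathbf{f}$ a.s. Combining these $\theta$-wise inequalities gives $c-H^{-}\cdot\mathbf{S}_T\le\mathbf{f}\le c+H^{+}\cdot\mathbf{S}_T$, whence $(H^{+}+H^{-})\cdot\mathbf{S}_T\ge\mathbf{0}$. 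Now $NRA$ forces $(H^{+}+H^{-})\cdot\mathbf{S}_T=\mathbf{0}$, so $c+H^{+}\cdot\mathbf{S}_T=c-H^{-}\cdot\mathbf{S}_T$ and $\mathbf{f}$ is squeezed between two equal quantities. Therefore $\mathbf{f}=c\mathbf{1}+H^{+}\cdot\mathbf{S}_T$, exhibiting replicability with $x=c$ and $H=H^{+}$.

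The main obstacle is obtaining the \emph{lower} hedge: one must apply the superhedging duality to $-\mathbf{f}$ and check that its hypothesis $\sup_{\theta}\sup_{\mathbf{Q}}\mathbf{Q}(-\mathbf{f})<\infty$ follows from constancy, so that the lower hedging price $-\pi(-\mathbf{f})$ matches the upper price $\pi(\mathbf{f})=c$. Once both one-sided superhedging strategies are in hand, collapsing the price gap into an exact replication is a clean application of $NRA$, exactly as above. One should also note that the constant value necessarily coincides with the fair price $c$, and that the claim $\mathbf{f}\ge\mathbf{0}$ plays no essential role beyond guaranteeing the relevant suprema are finite.
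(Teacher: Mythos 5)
Your proof of $(i)\Rightarrow(ii)$ coincides with the paper's: write $\mathbf{f}=x\mathbf{1}+H\cdot\mathbf{S}_T$, use $\mathbf{Q}(\mathbf{1})=1$ and the generalized martingale property of $H\cdot\mathbf{S}$ via Lemma \ref{lem:mart_1} to get $\mathbf{Q}(\mathbf{f})=x$ for every $\mathbf{Q}\in\mathcal{Q}$. For $(ii)\Rightarrow(i)$ you take a genuinely different route. The paper superhedges $\mathbf{f}$ from above only: taking the strategy $H$ from Theorem \ref{thm:superhedge}, it argues that if $\pi(\mathbf{f})+H\cdot\mathbf{S}_T-\mathbf{f}$ had slack with positive probability for some $\theta$, then $\mathbf{Q}(\mathbf{f})<\pi(\mathbf{f})$ for $\mathbf{Q}\in\mathcal{Q}^{\theta}$, and constancy would give $\sup_{\mathbf{Q}\in\mathcal{Q}}\mathbf{Q}(\mathbf{f})<\pi(\mathbf{f})$, contradicting the duality. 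You instead apply Theorem \ref{thm:superhedge} to both $\mathbf{f}$ and $-\mathbf{f}$ (legitimate, since that theorem is stated for arbitrary random variables with finite supremum, and constancy gives both $\pi(\mathbf{f})=c$ and $\pi(-\mathbf{f})=-c$), and then use $NRA$ applied to $H^{+}+H^{-}$ to collapse the sandwich $c-H^{-}\cdot\mathbf{S}_T\le\mathbf{f}\le c+H^{+}\cdot\mathbf{S}_T$ into an equality. Both arguments rest on the superhedging duality; yours needs it twice but in exchange invokes $NRA$ directly and sidesteps the delicate point in the paper's version that an individual $\mathbf{Q}\in\mathcal{Q}^{\theta}$ need not charge the set where the superhedge has slack (its density $Z^{\theta}_T$ is only required to be nonzero somewhere), a point the paper resolves only implicitly through the constancy assumption. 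Your closing remark that $\mathbf{f}\ge\mathbf{0}$ is inessential beyond finiteness of the suprema is consistent with how the paper uses the result.
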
 
\begin{proof}
	$(i) \implies (ii)$. Assume that there exist $x \in \mathbb{R}, H \in \mathcal{A}$ such that $x + H \cdot \mathbf{S}_T = \mathbf{f}, a.s..$ For any $\theta \in \Theta$ and $\mathbf{Q} \in \mathcal{Q}^{\theta}$, the process $\mathbf{S}$ is a generalized martingale under $\mathbf{Q}$ and thus $H \cdot \mathbf{S}_T$ is also a generalized martingale under $\mathbf{Q}$, by Lemma \ref{lem:mart_1} (iii) in Appendix, subsection \ref{sec:generalized}. The conclusion in $(ii)$ then follows by computing
	$$ \mathbf{Q}(\mathbf{f}) = \mathbf{Q}(x + H \cdot \mathbf{S}_T) = x,$$
	for every $\mathbf{Q} \in \mathcal{Q}$.
	
	$(ii) \implies (i)$. Let $H$ be a superhedging strategy for $\mathbf{f}$, that is 
	$\pi(\mathbf{f}) + H \cdot \mathbf{S}_T - \mathbf{f} \ge 0, a.s..$ If for some $\theta \in \Theta$, the inequality is strictly positive with strictly positive probability, then we get for all $\mathbf{Q} \in \mathcal{Q}^{\theta}$
	$$0 < \mathbf{Q}(\pi(\mathbf{f}) + H \cdot \mathbf{S}_T - \mathbf{f} ).$$
	Noting that $H \cdot \mathbf{S}_T$ is a generalized martingale under $\mathbf{Q}$ with zero expectation and using $(ii)$, we get
	$$ \sup_{\mathbf{Q} \in \mathcal{Q}^{\theta}} \mathbf{Q}(\mathbf{f}) = \sup_{\mathbf{Q} \in \mathcal{Q}} \mathbf{Q}(\mathbf{f}) < \pi(\mathbf{f}),$$ which contradicts Theorem \ref{thm:superhedge}. Therefore, $\mathbf{f}$ is replicable.
\end{proof}

\begin{proof}[Proof of Theorem \ref{thm:complete}]
	According to Proposition \ref{pro:complete}, we only need to prove the ``only if" part. Assume the market is complete. Then for every $A \in \mathcal{F}_T$ with $P[A] > 0$ and for every $\theta \in \Theta$, the claim $\mathbf{1}^{\theta}_{A}$ is replicable. Let $\mathbf{Q} = (Z^{\theta_1}_T,...,Z^{\theta_k}_T) \in \mathcal{Q}$ be fixed.  By Proposition \ref{pro:complete}, we obtain that
	\begin{equation}\label{eq:comp}
	\mathbf{Q}(\mathbf{1}^{\theta}_{A}) = \mathbf{Q'}(\mathbf{1}^{\theta}_{A})
	\end{equation}
	for every $\mathbf{Q}' = (Z'^{\theta}_T)_{\theta \in \Theta} \in \mathcal{Q}$. The choices $\theta = \theta_i, 1 \le i \le k$ give that $Z'^{\theta_i}_T = Z^{\theta_i}_T$. For $\theta \notin \{\theta_1,...,\theta_k\}$, we get that $Z'^{\theta}_T = 0$. This implies that there exists a unique robust pricing system $\mathbf{Q}$ and $|\Theta|$ is finite. 
	
	We consider the case $NA(\theta)$ holds for some $\theta \in \Theta$ and thus there is a martingale measure $Q^{\theta}$ for $S^{\theta}$ with the Radon-Nikodym density $Z^{\theta}_T$. Let $\mathbf{Q}^{\theta}= Z^{\theta}_T\mathbf{1}^{\theta}$ be the corresponding robust pricing system. If $\Theta$ is not a singleton, for every $\mathbf{Q}' \in \mathcal{Q}^{\theta'}$ where $ \theta' \ne \theta$ and $0 <\mathbf{Q}'(\mathbf{1}^{\theta'}_{A})$ for some $A \in \mathcal{F}_T$, we get from (\ref{eq:comp}) that 
	$$0 <\mathbf{Q}'(\mathbf{1}^{\theta'}_{A}) = \mathbf{Q}^{\theta}(\mathbf{1}^{\theta'}_{A}) = E[ Z_T^{\theta} \times 0] =  0.$$
	This contradiction shows that $\Theta = \{ \theta \}$ and the model $\theta$ is complete.
\end{proof}


\subsection{Proof of Theorem \ref{thm:ftap_o}}\label{proof:thm_ftap_o}
We repeat the arguments of \cite{bn15} for the sake of completeness, noting that unlike \cite{bn15}, we work with robust pricing systems and do not impose conditions on $\mathbf{f}$. For $e=0$, the results are true by Theorem \ref{thm:robustftap}, Theorem \ref{thm:superhedge}, and Proposition \ref{pro:complete}. Assume that the results hold true for the market with the stocks and $e \ge 0$ options. Now, we prove the theorem for the market with one more option $\mathbf{g}^{e+1}$ with the market price $g^{e+1}_0 = 0$. 
	
	Consider (a). Let $NRA(\Theta)$ hold. If $\mathbf{g}^{e+1}$ is replicable in the market consists of the stocks and available options $\mathbf{g}^1,...,\mathbf{g}^e$, we come back to the case with $e$ options and therefore we may assume that $\mathbf{g}^{e+1}$ is not replicable. If $g^{e+1}_0 \ge \pi^e(\mathbf{g}^{e+1})$, then we can construct a robust arbitrage by shorting one unit of $\mathbf{g}^{e+1}$ and using the initial capital $\pi^e(\mathbf{g}^{e+1})$ together with the superhedging strategy for $\mathbf{g}^{e+1}$, which exists by the induction hypothesis, to cover $\mathbf{g}^{e+1}$ at time $T$. Thus, the consistency with $RNA$ implies $g^{e+1}_0 < \pi^e(\mathbf{g}^{e+1})$. The induction hypothesis (b) gives
	$$ g^{e+1}_0 < \pi^e(\mathbf{g}^{e+1}) = \sup_{\theta \in \Theta}\sup_{\mathbf{Q} \in \mathcal{Q}^{\theta}_{cal,e}} \mathbf{Q}(\mathbf{g}^{e+1}),$$
	and since $\mathbf{g}^{e+1}$ is not replicable, by the induction hypothesis (c), there is $\theta_+ \in \Theta$,  $\mathbf{Q}^{\theta_+}_+ \in \mathcal{Q}^{\theta_+}_{cal,e}$ such that
	$$g^{e+1}_0 < \mathbf{Q}^{\theta_+}_+(\mathbf{g}^{e+1}) < \pi^e(\mathbf{g}^{e+1}).$$
	Similarly, we find $\theta_- \in \Theta$ and $\mathbf{Q}^{\theta_-}_- \in \mathcal{Q}^{\theta_-}_{cal,e} $ such that
	\begin{equation}\label{eq:sandwichmeasure}
		-\pi^e(-\mathbf{g}^{e+1})< \mathbf{Q}^{\theta_-}_-(-\mathbf{g}^{e+1}) < g^{e+1}_0 < \mathbf{Q}^{\theta_+}_+(\mathbf{g}^{e+1}) < \pi^e(\mathbf{g}^{e+1})
	\end{equation}
	By the induction hypothesis (a), for each $\theta \in \Theta$ there is $\mathbf{Q}^{\theta} \in \mathcal{Q}^{\theta}_{cal,e}$. Choosing appropriate weights $\lambda_-, \lambda_+, \lambda_0 \in (0,1)$ and $\lambda_- + \lambda_+ + \lambda_0 = 1$, we have that
	$$\mathbf{Q}' :=  \lambda_-\mathbf{Q}^{\theta_-}_- + \lambda_+\mathbf{Q}^{\theta_+}_+ + \lambda_0 \mathbf{Q}^{\theta} \in \mathcal{Q}^{\theta}_{cal,e} \text{ and } \mathbf{Q}'(\mathbf{g}^{e+1}) = 0.$$
	It means that $\mathbf{Q}' \in \mathcal{Q}^{\theta}_{cal,e+1}$. Thus we prove (i) implies (ii) in (a). The converse implication is easy. The proof of (c) is straightforward as well.
	
	Next we consider (b). Assume there are $x \in \mathbb{R}, H \in \mathcal{A}, \mathbf{a} \in \mathbb{R}^{e+1}$ such that $x + H \cdot \mathbf{S}_T + \sum_{i=1}^{e+1} a^i g^i \ge \mathbf{f}, a.s..$ We compute easily for every $\mathbf{Q} \in \mathcal{Q}_{cal,e+1}$ that $H \cdot \mathbf{S}$ is a generalized martingale under $\mathbf{Q}$ and $x = \mathbf{Q}(x + H \cdot \mathbf{S}_T + \sum_{i=1}^{e+1} a^i \mathbf{g}^i ) \ge \mathbf{Q}(\mathbf{f})$. As a result, we have
	\begin{equation}\label{eq:hedge_easy}
		\pi^{e+1}(\mathbf{f}) \ge \sup_{\theta \in \Theta} \sup_{\mathbf{Q} \in \mathcal{Q}^{\theta}_{cal,e+1}} \mathbf{Q}(\mathbf{f}).
	\end{equation}	
	Now we prove the reverse inequality,
	\begin{equation}\label{eq:hedge_reverse}
		\pi^{e+1}(\mathbf{f}) \le \sup_{\theta \in \Theta} \sup_{\mathbf{Q} \in \mathcal{Q}^{\theta}_{cal,e+1}} \mathbf{Q}(\mathbf{f}).
	\end{equation}
	We claim that 
	\begin{equation}\label{eq:claim}
		\text{there is a sequence $\mathbf{Q}_n \in \mathcal{Q}_{cal,e}$ such that $\mathbf{Q}_n(\mathbf{g}^{e+1}) \to 0, \mathbf{Q}_n(\mathbf{f}) \to \pi^{e+1}(\mathbf{f})$.}
	\end{equation} 
	Without loss of generality, we may assume that $\pi^{e+1}(\mathbf{f}) = 0$. If the claim (\ref{eq:claim}) fails, we get $0 \notin \overline{\{ (\mathbf{Q}(\mathbf{g}^{e+1}), \mathbf{Q} (\mathbf{f})),  \mathbf{Q} \in \mathcal{Q}_{cal,e} \}} \subset \mathbb{R}^2$. Using a separation argument, there are $\alpha, \beta \in \mathbb{R}$ such that
	\begin{equation}\label{eq:e}
		0 > \sup_{\mathbf{Q} \in \mathcal{Q}_{cal,e}} \mathbf{Q}(\alpha \mathbf{g}^{e+1} + \beta \mathbf{f}).
	\end{equation} 
	By (b) of the inductive hypothesis, it holds that
	$$ \sup_{\mathbf{Q} \in \mathcal{Q}_{cal,e}} \mathbf{Q}(\alpha \mathbf{g}^{e+1} + \beta \mathbf{f}) = \pi^e(\alpha \mathbf{g}^{e+1} + \beta \mathbf{f}).$$
	By definition $\pi^e(\mathbf{\psi}) \ge \pi^{e+1}(\mathbf{\psi})$ for any random variable $\mathbf{\psi}.$ Since $\mathbf{g}^{e+1}$ can be hedged at price $0$, we obtain $\pi^{e+1}(\alpha \mathbf{g}^{e+1} + \beta \mathbf{f}) = \pi^{e+1}(\beta \mathbf{f})$. Therefore,
	$$0 > \sup_{\mathbf{Q} \in \mathcal{Q}_{cal,e}} \mathbf{Q}(\alpha \mathbf{g}^{e+1} + \beta \mathbf{f}) \ge \pi^{e+1}(\beta \mathbf{f}).$$
	Clearly, $\beta \ne 0$. If $\beta > 0,$ we obtain $\pi^{e+1}(\mathbf{f}) < 0$, which contradicts our assumption that $\pi^{e+1}(\mathbf{f}) = 0$. Thus $\beta < 0$.  Since $\mathcal{Q}_{cal,e+1} \subset \mathcal{Q}_{cal,e}$, (\ref{eq:e}) implies that $0 > \mathbf{Q}'(\beta\mathbf{f})$, for $\mathbf{Q}' \in \mathcal{Q}_{cal,e+1}$. Consequently, $\mathbf{Q}'(\mathbf{f}) > 0 = \pi^{e+1}(\mathbf{f})$, contradicting to (\ref{eq:hedge_easy}). Therefore, the claim (\ref{eq:claim}) holds true. 
	
	Since $\mathbf{g}^{e+1}$ is not replicable, there are two robust pricing systems $\mathbf{Q}^{\theta_+}_+,\mathbf{Q}^{\theta_-}_-$ as in (\ref{eq:sandwichmeasure}). For the sequence $\mathbf{Q}_n$ as in (\ref{eq:claim}), we can find $\lambda^n_-, \lambda^n, \lambda^n_+ \in [0,1]$ such that $\lambda^n_- + \lambda^n + \lambda^n_+ = 1$ and 
	$$\mathbf{Q}_n'= \lambda^n_- \mathbf{Q}^{\theta_-}_- + \lambda^n \mathbf{Q}_n + \lambda^n_+ \mathbf{Q}^{\theta_+}_+ \in \mathcal{Q}_{cal,e}  \text{ satisfies } \mathbf{Q}_n'(\mathbf{g}^{e+1}) = 0,$$
	or equivalently, $\mathbf{Q}_n' \in \mathcal{Q}_{cal,e+1}$. By (\ref{eq:claim}), we can choose $\lambda^n_{\pm} \to 0$. Therefore, $\mathbf{Q}'_n(\mathbf{f}) \to 0$, which implies (\ref{eq:hedge_reverse}).

\section{Appendix}\label{sec:app}
\subsection{Product space}\label{sec:app_prod}
Let $I$ be an index set and for each $i \in I$, let $(X_i,\tau_i)$ be a topological space. The product space, denoted by $\prod_{i \in I}(X_i,\tau_i)$, consists of the product set $\prod_{i \in I}X_i$ and a topology $\tau$ having as its basis the family
$$\left\lbrace  \prod_{i \in I}O_i:  O_i \in \tau_i \text{ and } O_i = X_i \text{ for all but a finite number of } i \right\rbrace .$$
The topology $\tau$ is called the product topology, which is the coarsest topology for which all the projections are continuous. Note that the product space defined in this way is also a topological vector space, see Theorem 5.2 of \cite{guide2006infinite}. The direct sum $\bigoplus_{i \in I} X_i$  is defined to be the set of tuples $(x_i)_{i \in I}$ with $x_i \in X_i$ such that $x_i =0$ for all but finitely many $i$.

If each $(X_i, \tau_i)$ is locally convex then $\prod_{i \in I}X_i$ is locally convex, too. If $I$ is uncountable, the uncountable product space $\prod_{i \in I}X_i$  is not normable. Furthermore, the uncountable product space $\prod_{i \in I}X_i$ is not first countable, sequential closedness is different from topological closedness. The dual of the product space $\prod_{i \in I} X_i$ is algebraically equal to the direct sum of their duals $\bigoplus_{i \in I} X^*_i$, see Lemma 28.32 of \cite{Schechter1996handbook}. 

It is known that convergence in probability
implies almost sure convergence along a subsequence. The following lemma extends this result to countable products of $L^0$ spaces.  
\begin{lemma}\label{lem:subseq}
	Let $I$ be a countable set. If a sequence  $(\mathbf{f}_n)_{ n \in \mathbb{N}}$ in $\prod_{i \in I} L^0(\mathcal{F},P)$ converges to $\mathbf{f}$ in the product topology, then there exists a subsequence $(n_k)_{k \in \mathbb{N}}$ such that $f^{i}_{n_k} \to f^{i}, a.s.$ for all $i \in I$. 
\end{lemma}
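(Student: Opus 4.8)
The plan is to reduce product-topology convergence to coordinate-wise convergence in probability, and then perform a diagonal extraction using the classical fact that a sequence converging in probability admits an almost surely convergent subsequence.

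First I would record that, because $I$ is countable, convergence in the product topology of $\prod_{i \in I} L^0(\mathcal{F},P)$ is exactly coordinate-wise convergence: every basic neighborhood of $\mathbf{f}$ constrains only finitely many coordinates, so $\mathbf{f}_n \to \mathbf{f}$ holds if and only if $f^i_n \to f^i$ in $L^0(\mathcal{F},P)$, i.e. in probability, for every $i \in I$. Equivalently, one may invoke the metrizability of the countable product noted after Proposition 9.3.9 of \cite{morris1989topology}, so that convergence is governed by a single metric whose coordinates are the individual metrics $d$.

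Next I would enumerate $I = \{i_1, i_2, \dots\}$ (if $I$ is finite the argument only simplifies) and build a nested family of subsequences. Since $f^{i_1}_n \to f^{i_1}$ in probability, I extract a subsequence $(n^{(1)}_k)_k$ along which $f^{i_1}_{n^{(1)}_k} \to f^{i_1}$ almost surely. Because $(f^{i_2}_{n^{(1)}_k})_k$ still converges to $f^{i_2}$ in probability (a subsequence of a convergent-in-probability sequence converges to the same limit), I extract a further subsequence $(n^{(2)}_k)_k \subset (n^{(1)}_k)_k$ along which $f^{i_2}$ converges almost surely, noting that $f^{i_1}$ continues to converge almost surely along it. Iterating, I obtain for each $m$ a subsequence $(n^{(m)}_k)_k$ refining $(n^{(m-1)}_k)_k$ along which $f^{i_1},\dots,f^{i_m}$ all converge almost surely.

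Finally I would take the diagonal $n_k := n^{(k)}_k$. For each fixed $j$, the tail $(n_k)_{k \ge j}$ is a subsequence of $(n^{(j)}_k)_k$, so $f^{i_j}_{n_k} \to f^{i_j}$ almost surely; since $j$ is arbitrary, the single subsequence $(n_k)_k$ works for all coordinates simultaneously. There is no genuine obstacle here: the only points that require care are the identification of product convergence with coordinate-wise convergence (which relies on countability through the finitely-supported basic open sets) and the bookkeeping of the diagonal argument, both of which are standard. Countability of $I$ is essential --- it is precisely what permits a \emph{single} subsequence to be chosen uniformly over all coordinates, whereas for uncountable $I$ one is forced to work with nets instead, as emphasized in the introduction.
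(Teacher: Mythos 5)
Your proof is correct, but it takes a different route from the paper's. You reduce to coordinate-wise convergence in probability and then run a nested-subsequence/diagonal extraction, using the classical one-coordinate fact (convergence in probability admits an a.s.\ convergent subsequence) as a black box. The paper instead works directly with the explicit metric $\mathbf{d}(\mathbf{f},\mathbf{g}) = \sum_i 2^{-i} d(f^i,g^i)/(1+d(f^i,g^i))$ inducing the product topology: it extracts a single subsequence $(n_k)$ with $\mathbf{d}(\mathbf{f}_{n_k},\mathbf{f}) < \varepsilon_k/2^k$, deduces $P[|f^i_{n_k}-f^i|\ge \varepsilon_k] \le 2^i/2^k$, and applies Borel--Cantelli coordinate by coordinate to the \emph{same} subsequence. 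The paper's argument buys a quantitative rate and avoids the diagonal bookkeeping; yours is more elementary, does not require writing down the product metric at all, and makes transparent exactly where countability enters (once in identifying product convergence with coordinate-wise convergence, once in permitting the diagonal). One small point of care in your version, which you handle implicitly: the diagonal sequence $n_k := n^{(k)}_k$ is strictly increasing because $n^{(k+1)}_{k+1} \ge n^{(k)}_{k+1} > n^{(k)}_k$ for nested subsequences, and a.s.\ convergence passes to subsequences, so the tail argument goes through. Both proofs are sound.
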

\begin{proof}
	We may assume $I=\mathbb{N}$. The metric
	$$\mathbf{d}(\mathbf{f},\mathbf{g}) := \sum_{i \in I} \frac{1}{2^i} \frac{d(f^{i},g^{i})}{1+d(f^{i},g^{i})}$$
	induces the product topology on $\prod_{i \in I} L^0(\mathcal{F},P)$, where $d$ is the metric inducing the topology of convergence in probability. Let $(\varepsilon_k)_{k \in \mathbb{N}}$ be a sequence of positive numbers decreasing to zero. Since $\mathbf{f}_n \to \mathbf{f}$ in $\mathbf{d}$, there exists a subsequence $(n_k)_{k \in \mathbb{N}}$ such that for all $k \in \mathbb{N}$, we have
	$\mathbf{d}(\mathbf{f}_{n_k}, \mathbf{f}) < \frac{\varepsilon_k}{2^k},$
	and hence for every $i \in I$, $d(f^{i}_{n_k}, f^{i}) < \frac{2^i\varepsilon_k}{2^k}.$ From this we obtain 
	$$P[|f^{i}_{n_k} - f^{i}| \ge \varepsilon_k ] \le \frac{2^i}{2^k}.$$
	For every $\varepsilon >0$, there is $K$ such that $\varepsilon_k \le \varepsilon$ for $k \ge K$, and thus we compute that
	$$\sum_{k = K}^{\infty}P[|f^{i}_{n_k} - f^{i}| \ge \varepsilon ] \le  \sum_{k = K}^{\infty}P[|f^{i}_{n_k} - f^{i}| \ge \varepsilon_k ] \le  2^i$$
	for every $i \in I$. Therefore $\sum_{k = 1}^{\infty}P[|f^{i}_{n_k} - f^{i}| \ge \varepsilon ]$ is also finite. By the Borel–Cantelli lemma, for every $i \in I$, the set $$\{ \omega:  |f^{i}_{n_k}(\omega) - f^{i}(\omega)| \ge \varepsilon \text{ infinitely often } \}$$ has zero probability for all $\varepsilon >0$, or equivalently, $f^{i}_{n_k} \to f^{i}, a.s.$ for every $i \in I$. 
\end{proof}

\subsection{Convex compactness of $L^0_+$}
A set $A \subset L^0$ is bounded in probability if $\lim_{n \to \infty}\sup_{f \in A}P[|f| \ge n] =0.$
For any set $I$ we denote by $Fin(I)$ the family of all non-empty finite subsets of $I$. This is a directed set with respect to the partial order induced by inclusion.
We recall Definition 2.1 of \cite{gordan}.
\begin{definition}\label{defi:convex_compact}
	A convex subset $C$ of a topological vector space is \emph{convexly compact} if for any non-empty set $I$ and any family $(F_i)_{i \in I}$ of closed and convex subsets of $C$, the condition
	$$ \forall D \in Fin(I), \qquad \bigcap_{i \in D} F_i \ne \emptyset$$
	implies
	$$ \bigcap_{i \in I} F_i \ne \emptyset.$$
\end{definition}
The following result gives a characterization for convex compactness in $L^0_+$, see Theorem 3.1 of \cite{gordan}.
\begin{theorem}
	A closed and convex subset $C$ of $L^0_+$ is convexly compact if and only if it is bounded in probability.
\end{theorem}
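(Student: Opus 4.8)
The plan is to prove the two implications separately. The direction \emph{bounded $\Rightarrow$ convexly compact} is where boundedness lets me extract a genuine limit from the finite-intersection data, and I would handle it directly with Lemma 2.1 of \cite{pratelli}. The converse \emph{convexly compact $\Rightarrow$ bounded} I would establish by contraposition, manufacturing an explicit family of closed convex sets that witnesses the failure of convex compactness; I expect this second direction to be the main obstacle, since it requires converting ``escaping mass'' into an honest empty intersection.

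\emph{Sufficiency.} Assume $C$ is bounded in probability and let $(F_i)_{i\in I}$ be closed convex subsets of $C$ with the finite intersection property. For each $D\in Fin(I)$ the set $\bigcap_{i\in D}F_i$ is non-empty, so I would pick $f_D$ in it, obtaining a net $(f_D)_{D\in Fin(I)}$ inside the bounded set $C$. Exactly as in Step 3 of the proof of Proposition \ref{pro:closedness}, Lemma 2.1 of \cite{pratelli} applies to this net and produces convex combinations $\widetilde f_D\in \mathrm{conv}\{f_E:E\ge D\}$ converging in probability to some $f$; since each $\widetilde f_D$ lies in the closed convex set $C$, so does $f$. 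To see $f\in F_i$, fix $i$ and take $D\ge\{i\}$: for every $E\ge D$ we have $i\in E$, hence $f_E\in\bigcap_{j\in E}F_j\subseteq F_i$, so by convexity $\widetilde f_D\in F_i$, and closedness of $F_i$ along the net gives $f\in F_i$. Thus $f\in\bigcap_{i\in I}F_i$.

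\emph{Necessity.} Arguing by contraposition, suppose $C$ is not bounded in probability, so there are $\alpha>0$ and $f_n\in C$ with $P[f_n\ge n]\ge\alpha$. I would set $u_n:=1\wedge(f_n/n)\in[0,1]$, so $E[u_n]\ge\alpha$, and then, by a Koml\'os/Mazur argument on the $L^2$-bounded sequence $(u_n)$, pass to convex weights $(\mu^n_k)_{k\ge n}$ supported on the tails such that $\bar u_n:=\sum_{k\ge n}\mu^n_k u_k\to\bar u$ a.s.\ with $E[\bar u]\ge\alpha$; hence $B:=\{\bar u>0\}$ has $P[B]>0$. Using the \emph{same} weights to form $g_n:=\sum_{k\ge n}\mu^n_k f_k\in C$ and the elementary bound $f_k\ge k\,u_k\ge n\,u_k$ for $k\ge n$, I obtain $g_n\ge n\,\bar u_n$, so $g_n\to+\infty$ a.s.\ on $B$. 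Now define $F_n:=\overline{\mathrm{conv}}\{g_k:k\ge n\}$, a nested sequence of closed convex subsets of $C$, which therefore has the finite intersection property. If $h\in\bigcap_n F_n$, then writing $m_n:=\inf_{k\ge n}g_k$ (so $m_n\to\infty$ on $B$), every convex combination of $\{g_k:k\ge n\}$ dominates $m_n$ on $B$; passing to the $L^0$-closure and extracting an a.s.-convergent subsequence (as in Lemma \ref{lem:subseq}) preserves this, giving $h\ge m_n$ on $B$ for all $n$, whence $h=\infty$ on $B$, contradicting $h\in L^0$. Thus $\bigcap_n F_n=\emptyset$ and $C$ is not convexly compact.

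The crux will be precisely this necessity step: the passage from the scattered condition $P[f_n\ge n]\ge\alpha$ to a \emph{single} positive-measure set $B$ on which suitable convex combinations diverge. Without convexification the escaping mass could be diluted away, and it is the inequality $g_n\ge n\bar u_n$ that forces genuine a.s.\ divergence on $B$ and hence the empty intersection. Everything else (nestedness, closure-stability of the inequality, finiteness of $h$) is routine once this escape lemma is in place.
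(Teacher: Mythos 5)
Your proof is correct. Note that the paper does not actually prove this statement: it is quoted verbatim as Theorem 3.1 of \cite{gordan} and used as a black box, so there is no internal proof to compare against; what you have written is essentially a faithful reconstruction of \v{Z}itkovi\'c's argument. Your sufficiency direction is exactly the forward-convex-combination scheme via Lemma 2.1 of \cite{pratelli} that the paper itself deploys in Step 3 of the proof of Proposition \ref{pro:closedness}, and the cofinality argument ($D\ge\{i\}$ forces $\widetilde f_D\in F_i$, then closedness of $F_i$ under net convergence) is complete. Your necessity direction is also sound: the truncation $u_n=1\wedge(f_n/n)$, the Mazur/Koml\'os passage to tail convex combinations with $E[\bar u]\ge\alpha$ (dominated convergence applies since $0\le\bar u_n\le 1$), the key inequality $g_n\ge n\bar u_n$ forcing a.s.\ divergence on $B=\{\bar u>0\}$, and the observation that every element of $\overline{\mathrm{conv}}\{g_k:k\ge n\}$ dominates $\inf_{k\ge n}g_k$ a.s.\ (preserved under $L^0$-limits by Lemma \ref{lem:subseq}) together yield the empty intersection. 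The only cosmetic point is the opening of the necessity step: unboundedness gives $\sup_{f\in C}P[f\ge n]\ge\alpha$ for all $n$ for some fixed $\alpha>0$, so one should select $f_n$ with $P[f_n\ge n]\ge\alpha/2$, say, rather than $\ge\alpha$ exactly; this changes nothing downstream.
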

\subsection{Generalized conditional expectation} \label{sec:generalized}
Let $\mathcal{F}$ be a sigma algebra. Let $I$ be an index set. In this subsection, we work with the product space $\mathbf{L}^1=\prod_{ i \in I} L^1(\mathcal{F}, P)$.  Let $\mathbf{Q}:\mathbf{L}^1 \to \mathbb{R}$ be a linear function such that $\mathbf{Q} = (Z^{i}_T)_{i \in I},$ $Z^{i}_T \ge 0$ for all $i \in I$ and $\mathbf{Q}(\mathbf{1}) = 1$.
\begin{definition}
	Let $\mathcal{G} \subset \mathcal{F}$ be two sigma algebras and $\mathbf{f} = (f_i)_{i \in I}$ be an $\mathcal{F}$-measurable random variable such that $\mathbf{Q}(\mathbf{f}) < \infty$. An $\mathcal{G}$-measurable random variable $\mathbf{f}_g$ is called a generalized conditional expectation of $\mathbf{f}$ with respect to $\mathcal{G}$ under $\mathbf{Q}$ if
	\begin{equation}\label{defi:gen_condi_expec}
	\mathbf{Q}(\mathbf{f}1_{A}) = \mathbf{Q}(\mathbf{f}_g1_{A}), \qquad \forall A \in \mathcal{G}.
	\end{equation} 
	We denote by $\mathbf{Q}(\mathbf{f}|\mathcal{G})$ the set of all generalized conditional expectations of $\mathbf{f}$. 
\end{definition}
This definition becomes the usual definition for conditional expectation when $|I| = 1$. However, there are significant differences between the two concepts when $|I| \ge 2$. See Example \ref{ex:2} below for the facts that uniqueness and monotonicity fail in general.  
\begin{example}[Non-uniqueness]\label{ex:2}
	We consider the toy model with $T = 2$ and $|\Theta| = 2$, that is $\Omega = \{-1,1\}^2, \mathcal{F}_t = \Pi^{-1}_t(2^{\Omega_t})$ where $\Pi$ is defined in \eqref{eq:Pi} and
		$$ S^{\theta_i}_t = s_0 + \sum_{u=1}^t (\mu^i_{u} + \sigma^i_{u}\text{proj}_t(\omega) ), \qquad s_0 \in \mathbb{R}^d, \qquad i, t \in \{1,2\}.
		$$ 
	$$\mathbf{Q}(\omega) = (Z^1_2,Z^2_2)(\omega) = \frac{1}{4} \left( \left( 1- \omega_1 \frac{\mu^1_1}{\sigma^1_1} \right) \left( 1- \omega_2 \frac{\mu^1_2}{\sigma^1_2} \right), \left( 1- \omega_1 \frac{\mu^2_1}{\sigma^2_1} \right) \left( 1- \omega_2 \frac{\mu^2_2}{\sigma^2_2} \right)   \right).$$ 
	Let us consider a conditional expectation of $(S^{\theta_1}_2,S^{\theta_2}_2)$ given $\mathcal{F}_1$, that is an $\mathcal{F}_1$-measurable vector $(X^1,X^2)$ such that
	\begin{eqnarray}
	E\left[ \left( Z^1_2S^{\theta_1}_2+ Z^2_2S^{\theta_2}_2 \right) 1_{\omega_1 = 1} \right] &=& E\left[\left( Z^1_2X^1 + Z^2_2X^2\right) 1_{\omega_1 = 1}\right] ,\label{eq:cond}\\
	E\left[\left( Z^1_2S^{\theta_1}_2+ Z^2_2S^{\theta_2}_2 \right) 1_{\omega_1 = -1} \right] &=& E\left[\left( Z^1_2X^1 + Z^2_2X^2\right) 1_{\omega_1 = -1}\right] .\label{eq:cond2}  
	\end{eqnarray} 
	It is easy to see that $(S^{\theta_1}_1,S^{\theta_2}_1)$ is a solution to the system of equations (\ref{eq:cond}), (\ref{eq:cond2}) since $Z^1_2, Z^2_2$ are classical martingale densities for $S^{\theta_1}, S^{\theta_2}$, respectively. Other solutions are vectors $(S^{\theta_1}_1 + Y^1, S^{\theta_2}_1 + Y^2)$ where $(Y^1,Y^2)$ is $\mathcal{F}_1$-measurable such that
	\begin{eqnarray}
	E\left[\left( Z^1_2Y^1 + Z^2_2Y^2\right) 1_{\omega_1 = 1} \right] &=& 0, \label{eq:cond3}\\
	E\left[\left( Z^1_2Y^1 + Z^2_2Y^2\right) 1_{\omega_1 = -1}\right]  &=& 0. \label{eq:cond4}
	\end{eqnarray} 
	In the case without uncertainty, for example $\Theta = \{ \theta_1\}$, there is the unique solution $Y^1(1) = Y^1(-1) =0$ to the system of equations (\ref{eq:cond3}), (\ref{eq:cond4}) and thus $S^{\theta_1}_1$ is the conditional expectation of $S^{\theta_1}_2$. 
\end{example}
\begin{definition}\label{defi:mart}
	Let $(\Omega, \mathcal{F}, P)$ be a probability space, equipped with a filtration $(\mathcal{F}_t)_{t=0,...,T}$. 	An adapted process $(\mathbf{M}_t)_{t=0,...,T}$ is a generalized martingale under $\mathbf{Q}$ if
	\begin{itemize}
		\item[(i)] $E\left[ \left|  \sum_{i \in I} Z^{i}_t M^{i}_t \right|  \right] < \infty, $ for $0 \le t \le T$ ,
		\item[(ii)] $\mathbf{Q}(\mathbf{M}_t1_{A_s}) = \mathbf{Q}(\mathbf{M}_s1_{A_s}), \qquad \forall A_s \in \mathcal{F}_s, 0 \le s \le t \le T$,
	\end{itemize}  
	A process $\mathbf{M}$ is a generalized supermartingale under $\mathbf{Q}$ if (ii) is replaced by
	$$\mathbf{Q}(\mathbf{M}_t1_{A_s}) \le \mathbf{Q}(\mathbf{M}_s1_{A_s}), \qquad \forall A_s \in \mathcal{F}_s,  0 \le s \le t \le T$$
\end{definition}
It is easy to observe that if $\mathbf{M}$ is a generalized martingale under $\mathbf{Q}$ (resp. generalized supermartingale under $\mathbf{Q}$) then $\mathbf{Q}(\mathbf{M}_t) = \mathbf{Q}(\mathbf{M}_s)$ (resp. $\mathbf{Q}(\mathbf{M}_t) \le \mathbf{Q}(\mathbf{M}_s)$) for $s \le t$.
\begin{definition}
	An adapted process $(\mathbf{M}_t)_{t=0,...,T}$ is a generalized local martingale under $\mathbf{Q}$ if there
	is an increasing sequence $(\tau_n)_{n \in \mathbb{N}}$ of stopping times such that $P(\lim_{n \to \infty} \tau_n = \infty) = 1$ and that each stopped process $\mathbf{M}_{t \wedge \tau_n}1_{\tau_n >0}$ is a generalized martingale under $\mathbf{Q}$.
\end{definition}
The following facts are easily obtained, see Theorem 5.15  of \cite{schied} and \cite{jacod}.  
\begin{lemma}\label{lem:mart_1}
	Let $\mathbf{M} \ge \mathbf{0}$ be an adapted process with $\mathbf{M}_0 = \mathbf{1}.$ The following are equivalent:
	\begin{itemize}
		\item[(i)] $\mathbf{M}$ is a generalized martingale under $\mathbf{Q}$.
		\item[(ii)] If $H$ is predictable and bounded, then $\mathbf{V} = H \cdot \mathbf{M}$ is a generalized martingale under $\mathbf{Q}$.
		\item[(iii)] If H is predictable, then $\mathbf{V} = H \cdot \mathbf{M}$ is a generalized local martingale under $\mathbf{Q}$. If in addition that $E\left[ \left( \sum_{i \in I} Z^{i}_T V^{i}_T\right)^-  \right] < \infty$, then  $\mathbf{V}$ is a generalized martingale under $\mathbf{Q}$.
	\end{itemize}
\end{lemma}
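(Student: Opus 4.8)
The plan is to collapse the whole vector-valued statement onto a single real-valued process and then quote the classical one-dimensional theory. Write $Z^{\theta}_t := E[Z^{\theta}_T\mid\mathcal{F}_t]$ (only finitely many are nonzero, since $\mathbf{Q}\in\bigoplus_{\theta}L^{\infty}$), and for an adapted $\mathbf{X}$ satisfying condition (i) of Definition \ref{defi:mart} set $N^{\mathbf{X}}_t:=\sum_{\theta\in\Theta}Z^{\theta}_tX^{\theta}_t$. For $A_s\in\mathcal{F}_s$ with $s\le t$, pulling $E[\cdot\mid\mathcal{F}_t]$ through the $\mathcal{F}_t$-measurable $X^{\theta}_t$ gives $\mathbf{Q}(\mathbf{X}_t1_{A_s})=\sum_{\theta}E[Z^{\theta}_tX^{\theta}_t1_{A_s}]=E[N^{\mathbf{X}}_t1_{A_s}]$. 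Hence $\mathbf{X}$ is a generalized martingale under $\mathbf{Q}$ if and only if $N^{\mathbf{X}}$ is an ordinary $P$-martingale. Because $\mathbf{M}\ge\mathbf{0}$ and each $Z^{\theta}_t\ge0$, the individual terms $Z^{\theta}_tM^{\theta}_s$ ($s\le t$) are integrable (indeed $E[Z^{\theta}_tM^{\theta}_s]=E[Z^{\theta}_sM^{\theta}_s]\le E[N^{\mathbf{M}}_s]<\infty$), so every conditional expectation below is well defined. This reduction is the engine of the proof: it turns (i) into ``$N^{\mathbf{M}}$ is a $P$-martingale'' and the generalized (local) martingale property of $\mathbf{V}$ into the corresponding property of $N^{\mathbf{V}}$.

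Next I would record the one-step identity $\sum_{\theta}E[Z^{\theta}_t\Delta M^{\theta}_t\mid\mathcal{F}_{t-1}]=E[N^{\mathbf{M}}_t\mid\mathcal{F}_{t-1}]-N^{\mathbf{M}}_{t-1}$, obtained by splitting $\Delta M^{\theta}_t=M^{\theta}_t-M^{\theta}_{t-1}$ and using $E[Z^{\theta}_t\mid\mathcal{F}_{t-1}]=Z^{\theta}_{t-1}$; thus, once $N^{\mathbf{M}}$ is a martingale, these increments are martingale differences. For $(i)\Rightarrow(ii)$ with bounded predictable $H$, one checks $N^{\mathbf{V}}_t\in L^1$ (boundedness of $H$ plus integrability of the $Z^{\theta}_tM^{\theta}_s$) and computes $E[N^{\mathbf{V}}_t-N^{\mathbf{V}}_{t-1}\mid\mathcal{F}_{t-1}]=H_t\sum_{\theta}E[Z^{\theta}_t\Delta M^{\theta}_t\mid\mathcal{F}_{t-1}]=0$, so $N^{\mathbf{V}}$ is a $P$-martingale. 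For $(ii)\Rightarrow(i)$ I take $H\equiv1$, so that $N^{\mathbf{V}}_t=N^{\mathbf{M}}_t-\sum_{\theta}Z^{\theta}_t$; since $\sum_{\theta}Z^{\theta}_t=E[\sum_{\theta}Z^{\theta}_T\mid\mathcal{F}_t]$ is already a $P$-martingale, $N^{\mathbf{M}}$ is one as well.

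For $(iii)$ and arbitrary predictable $H$, I would localize by $\tau_n:=\inf\{t:\|H_{t+1}\|>n\}$ (with $\inf\emptyset$ read as beyond $T$), a stopping time by predictability with $\tau_n\to\infty$ a.s. The stopped integrand $H1_{\{\cdot\le\tau_n\}}$ is bounded and predictable, so by (ii) the stopped process $\mathbf{V}^{\tau_n}$ is a generalized martingale; this is precisely the generalized local martingale property. For the upgrade under $E[(N^{\mathbf{V}}_T)^-]<\infty$, observe that the $P$-martingale $W^{(n)}_t:=\sum_{\theta}Z^{\theta}_tV^{\theta}_{t\wedge\tau_n}$ associated with $\mathbf{V}^{\tau_n}$ satisfies $W^{(n)}_{t\wedge\tau_n}=N^{\mathbf{V}}_{t\wedge\tau_n}$, so by optional stopping $N^{\mathbf{V}}$ is a genuine $P$-local martingale with localizing sequence $\tau_n$. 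Since $N^{\mathbf{V}}_T=\sum_{\theta}Z^{\theta}_TV^{\theta}_T$ has integrable negative part, the classical discrete-time result (Theorem 5.15 of \cite{schied} and \cite{jacod}) promotes $N^{\mathbf{V}}$ to a true $P$-martingale, i.e. $\mathbf{V}$ is a generalized martingale. Finally $(iii)\Rightarrow(i)$ again uses $H\equiv1$: here $N^{\mathbf{V}}_T=N^{\mathbf{M}}_T-\sum_{\theta}Z^{\theta}_T$ with $N^{\mathbf{M}}_T\ge0$ and $\sum_{\theta}Z^{\theta}_T\in L^{\infty}$, forcing $(N^{\mathbf{V}}_T)^-\le\sum_{\theta}Z^{\theta}_T\in L^1$, so the second half of (iii) applies and returns (i). This closes the cycle $(i)\Rightarrow(ii)\Rightarrow(iii)\Rightarrow(i)$.

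I expect the bookkeeping in $(iii)$ to be the delicate point: one must match the localizing sequence so that $N^{\mathbf{V}}$ itself (not merely the auxiliary $W^{(n)}$) is recognized as a $P$-local martingale via the stopping identity $W^{(n)}_{t\wedge\tau_n}=N^{\mathbf{V}}_{t\wedge\tau_n}$, and then invoke the one-sided integrability $E[(N^{\mathbf{V}}_T)^-]<\infty$ correctly to pass from a local martingale to a martingale. Everything else is routine integrability checking once the reduction to the scalar process $N^{\mathbf{X}}$ is in place.
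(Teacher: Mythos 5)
Your proof is correct and its skeleton (the cycle $(i)\Rightarrow(ii)\Rightarrow(iii)\Rightarrow(i)$, the localization $\tau_n=\inf\{t:\|H_{t+1}\|>n\}$, and the choice $H\equiv 1$ to close the loop) coincides with the paper's. The genuine difference lies in how the hard step --- upgrading the generalized local martingale to a generalized martingale under $E[(\sum_\theta Z^\theta_T V^\theta_T)^-]<\infty$ --- is handled. The paper works directly with $L_t=\sum_\theta Z^\theta_t V^\theta_t$ and reproves the scalar local-to-true-martingale theorem inline: a backward induction showing $E[L_{t-1}^-]\le E[L_t^-]$ via the conditional identity \eqref{eq:condi}, then Fatou for $E[L_t^+]$, then dominated convergence for the martingale property. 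You instead isolate the reduction as a stated equivalence up front ($\mathbf{X}$ is a generalized $\mathbf{Q}$-martingale iff $N^{\mathbf{X}}_t=\sum_\theta Z^\theta_t X^\theta_t$ is a $P$-martingale --- the same observation the paper makes after Theorem~\ref{thm:robustftap} for $\mathbf{Z}_t\mathbf{S}_t$), verify via $W^{(n)}_{t\wedge\tau_n}=N^{\mathbf{V}}_{t\wedge\tau_n}$ that $N^{\mathbf{V}}$ is a genuine $P$-local martingale, and then quote the classical discrete-time result as a black box. This is cleaner and more modular; the one point you should make precise is that the result you need is the Jacod--Shiryaev statement for a \emph{general} discrete-time local martingale with integrable terminal negative part (not merely for a martingale transform $H\cdot M$ as in Theorem 5.15 of \cite{schied}), since $N^{\mathbf{V}}$ has increments $H_t\sum_\theta Z^\theta_t\Delta M^\theta_t$ plus cross terms in $\Delta Z^\theta_t$ and is not literally of the form $\tilde H\cdot\tilde M$ for a single reference martingale; that general statement does hold (and its proof is exactly the backward-induction argument the paper writes out), so nothing is lost, but the citation should be pinned to \cite{jacod} rather than to the martingale-transform version. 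Your integrability bookkeeping in $(i)\Rightarrow(ii)$ (each $Z^\theta_t M^\theta_s$, $s\le t$, integrable because $\mathbf{M}\ge\mathbf{0}$ and $Z^\theta\ge 0$) is the right way to justify the conditional-expectation manipulations and is implicit in the paper.
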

\begin{proof}
	
	$(i) \implies (ii)$. Assume that $\sup_{0\le t \le T-1}|H_t| \le c$. It is easy to check that $\mathbf{V}$ satisfies the property (i) and (ii) in Definition \ref{defi:mart}. 
	
	$(ii) \implies (iii)$. Define $\tau_n:= \inf\{t:|H_{t+1}| > n\}$ and $H^n_t := H_t1_{t \le \tau_n}$ for $n \in \mathbb{N}$. Since $H$ is predictable and finite-valued, $(\tau_n)_{n \in \mathbb{N}}$ is a sequence of stopping times increasing a.s. to infinity. By (ii), or each $n \in \mathbb{N}$, the process $H^n \cdot \mathbf{M}$ is a generalized martingale under $\mathbf{Q}$. Noting that $\mathbf{V}_{t \wedge \tau_n} = H^n \cdot \mathbf{M}_t$, we obtain that $\mathbf{V}$ is a generalized local martingale under $\mathbf{Q}$.
	
	For convenient notations, we define $L_t := \sum_{i \in I} Z^{i}_t V^{i}_t$. By assumption, $E\left[ L^-_T \right] < \infty$. We show inductively that $E\left[ L^-_t \right] < \infty$ for $t=1,...,T-1$.  The generalized martingale property of $\mathbf{V}_{t \wedge \tau_n}1_{\tau_n >0}$ under $\mathbf{Q}$ implies that $\forall A_{t-1} \in \mathcal{F}_{t-1}$,
	\begin{equation}\label{eq:condi2}
	E\left[\sum_{i \in I} Z^{i}_T\left( V^{i}_{t \wedge  \tau_n} -V^{i}_{(t-1) \wedge  \tau_n}\right) 1_{\tau_n >0} 1_{A_{t-1}}  \right] = 0, 
	\end{equation}
	and consequently,
	\begin{eqnarray}
	E\left[ \sum_{i \in I} \left(  Z^{i}_tV^{i}_{t \wedge  \tau_n} -Z^{i}_{t-1}V^{i}_{(t-1) \wedge  \tau_n}\right) 1_{\tau_n >0} 1_{A_{t-1}} \right] &=& 0 \label{eq:condi3}\\
	E\left[ \left. \sum_{i \in I} \left( Z^{i}_t V^{i}_{t \wedge  \tau_n} - Z^{i}_{t-1} V^{i}_{(t-1) \wedge  \tau_n}\right) 1_{\tau_n > t-1}  \right|  \mathcal{F}_{t-1} \right] &=& 0, a.s.. \label{eq:condi}
	\end{eqnarray}
	Using the inequality $x^- \ge -x$, \eqref{eq:condi3} and (\ref{eq:condi}), we compute that
	\begin{eqnarray*}
		E\left[\left. \left(  \sum_{i \in I} Z^{i}_tV^{i}_t\right) ^-1_{\tau_n > t-1}\right|  \mathcal{F}_{t-1}\right]  &=& E\left[\left. \left(  \sum_{i \in I} Z^{i}_tV^{i}_{t \wedge \tau_n}\right) ^-1_{\tau_n > t-1}\right| \mathcal{F}_{t-1}\right]\\
		&\ge& - E\left[ \left.  \sum_{i \in I} Z^{i}_tV^{i}_{t \wedge \tau_n}1_{\tau_n > t-1} \right| \mathcal{F}_{t-1}\right]\\
		&=& - \sum_{i \in I} Z^{i}_{t-1}V^{i}_{t-1}1_{\tau_n > t-1}.
	\end{eqnarray*}
	Sending $n$ to infinity yields $E[L^-_t|\mathcal{F}_{t-1}] \ge L^-_{t-1},$ and hence, $E[L^-_{t-1}] \le E[L^-_t]$, which implies the assertion above. Next, the Fatou lemma and (\ref{eq:condi2}) give
	\begin{eqnarray*}
		E\left[ L^+_t \right]  &=& E\left[ \lim_{n \to \infty} \left(  \sum_{i \in I} Z^{i}_{t \wedge \tau_n}V^{i}_{t \wedge \tau_n}\right)^+ 1_{\tau_n >0}
		\right] \\
		&\le& \liminf_{n \to \infty} E\left[  \left(  \sum_{i \in I} Z^{i}_{t \wedge \tau_n}V^{i}_{t \wedge \tau_n}\right)^+ 1_{\tau_n >0}
		\right] \\
		&=& \liminf_{n \to \infty} E\left[  \left(  \sum_{i \in I} Z^{i}_{t \wedge \tau_n}V^{i}_{t \wedge \tau_n}\right) 1_{\tau_n >0} + \left(  \sum_{i \in I} Z^{i}_{t \wedge \tau_n}V^{i}_{t \wedge \tau_n}\right)^- 1_{\tau_n >0}
		\right] \\
		&<& \infty,
	\end{eqnarray*}
	since the process $\left( \sum_{i \in I} Z^{i}_{t \wedge \tau_n} V^{i}_{t \wedge \tau_n }\right)  1_{\tau_n >0}$ is also a martingale under $P$ and $\sum_{t=0}^T L^-_t$ is an integrable majorant for every $L^-_{t \wedge \tau_n}$. Therefore, $\mathbf{V}$ satisfies the condition (i) of Definition \ref{defi:mart}.  Noting that for all $n \in \mathbb{N}$
	$$ \left| \sum_{i \in I} Z^{i}_{t \wedge  \tau_n } V^{i}_{t \wedge \tau_n }\right|   1_{\tau_n >0} \le \sum_{t=0}^T |L_t| \in L^1(P),$$
	the dominated convergence theorem gives 
	\begin{eqnarray*}
		E\left[ \left.   \sum_{i \in I} Z^{i}_t V^{i}_t  \right|  \mathcal{F}_{t-1} \right] &=& E\left[ \left.  \lim_{n \to \infty}  \sum_{i \in I} Z^{i}_{t \wedge \tau_n} V^{i}_{t \wedge \tau_n } 1_{\tau_n >0}  \right|  \mathcal{F}_{t-1} \right]  \\
		&=& \lim_{n \to \infty} E\left[ \left.    \sum_{i \in I} Z^{i}_{t \wedge \tau_n} V^{i}_{t \wedge \tau_n }  1_{\tau_n >0} \right|  \mathcal{F}_{t-1}  \right] \\
		&=& \lim_{n \to \infty} Z^{i}_{(t-1) \wedge \tau_n} V^{i}_{(t-1) \wedge \tau_n} 1_{\tau_n >0}  \\
		&=&  \sum_{i \in I} Z^{i}_{t-1} V^{i}_{t-1}, a.s..
	\end{eqnarray*} 
	Therefore, $\mathbf{V}$ is a generalized martingale under $\mathbf{Q}$. 
	
	$(iii) \implies (i)$. Consider the strategy $H_s = 1$ with the corresponding wealth process  $\mathbf{V}_t = \mathbf{M}_t - \mathbf{M}_0$. From (iii) we obtain that $\mathbf{M}_t$ is a generalized martingale under $\mathbf{Q}$.
\end{proof}


%
%



\end{document}